\def\ensurenomath#1{\relax\ifmmode \mbox{#1}\else#1\fi}
\newcommand{\lreach}[1]
           {\ensuremath{\operatorname{\lang{#1}--\lang{\ensurenomath{\sc Reach}}}}}
\newcommand{\ldreach}[1]{\ensuremath{\operatorname{\lang{#1}--\lang{{\ensurenomath{\sc
          DagReach}}}}}}
\newcommand{\ltreach}[1]{\ensuremath{\operatorname{\lang{#1}--\lang{{\ensurenomath{{\sc
          TreeReach}}}}}}}
\newcommand{\lureach}[1]{\ensuremath{\operatorname{\lang{#1}--\lang{\ensurenomath{\sc
          UReach}}}}}
\newcommand{\dr}[1]{\ensurenomath{{\sc Reach}}\ensuremath{^{-1}(\class{#1})}}
\newcommand{\ur}[1]{\ensurenomath{{\sc UReach}}\ensuremath{^{-1}(\class{#1})}}
\newcommand{\dagr}[1]{\ensurenomath{{\sc DagReach}}\ensuremath{^{-1}(\class{#1})}}
\newcommand{\treer}[1]{\ensurenomath{{\sc TreeReach}}\ensuremath{^{-1}(\class{#1})}}
\newcommand{\creach}[1]{\ensuremath{\operatorname{\class{#1}--\lang{{\ensurenomath{{\sc
          Reach}}}}}}}
\newcommand{\ctreach}[1]{\ensuremath{\operatorname{\class{#1}--\lang{{\ensurenomath{\sc
        TreeReach}}}}}}
\newcommand{\cdreach}[1]{\ensuremath{\operatorname{\class{#1}--\lang{{\ensurenomath{\sc
        DagReach}}}}}}
\newcommand{\cureach}[1]{\ensuremath{\operatorname{\class{#1}--\lang{{\ensurenomath{\sc
        UReach}}}}}}
\newcommand{\LogCFL}{\class{LogCFL}}
\newcommand{\mach}[1]{\class{#1}}
\newcommand{\nauxpdapoly}{\mach{AuxPDA(poly)}}
\newcommand{\lmreduces}{\ensuremath{\leq_{m}^{\L}}}
\newtheorem{observation}{Observation}
\newtheorem{definition}{Definition}
\newtheorem{proposition}{Proposition}
\newtheorem{theorem}{Theorem}
\newenvironment{proposition-a}[1]{\noindent {\bf Proposition~#1:~}\em }{\smallskip}
\newenvironment{lemma-a}[1]{\noindent {\bf Lemma~#1:~}\em }{\smallskip}
\newenvironment{theorem-a}[1]{\noindent {\bf Theorem~#1:~}\em
}{\smallskip}
\author{Balagopal Komarath$^*$ \and Jayalal Sarma$^*$ \and K. S. Sunil
  \thanks{Department of Computer Science \&
    Engineering, Indian Institute of Technology Madras, Chennai -- 36,
    India. Email : {\{\tt baluks,jayalal,sunil\}@cse.iitm.ac.in} 
    The first author was supported by the TCS Ph.D. Fellowship.}
}
\title{On the Complexity of {\sc L}-reachability\footnote{A preliminary version of this work with a subset of results was presented at {\em 16th International Workshop on Descriptional Complexity of Formal Systems (DCFS 2014)} and appears in \cite{KomarathSS14}.}}
\begin{document}
\maketitle

\begin{abstract}
  We initiate a complexity theoretic study of the language based
  graph reachability problem~(\lreach{\lang{L}}) : Fix a language
  \lang{L}. Given a graph whose edges are labelled with alphabet
  symbols of the language \lang{L} and two special vertices $s$ and $t$, test if there is path
  $P$ from $s$ to $t$ in the graph such that the concatenation of the
  symbols seen from $s$ to $t$ in the path $P$ forms a string in the
  language \lang{L}. We study variants of this problem with different
  graph classes and different language classes and obtain complexity
  theoretic characterizations for all of them. Our main results are
  the following:

  \begin{itemize}
    \item Restricting the language using formal language theory we
      show that the complexity of \lreach{\lang{L}} increases with the
      power of the formal language class. We show that there is a
      regular language for which the \lreach{\lang{L}} is
      $\NL$-complete even for undirected graphs.  In the case of
      linear languages, the complexity of \lreach{\lang{L}} does not go
      beyond the complexity of \lang{L} itself. Further, there is a
      deterministic context-free language \lang{L} for which \ldreach{\lang{L}}  
      is \LogCFL-complete.
      
    \item We use  \lreach{\lang{L}}  
      as a lens to study structural
      complexity. In this direction we show that there is a language
      $\lang{A}$ in \TC$^0$ for which $\ldreach{A}$ is
      $\NP$-complete. Using this we show that $\P$ vs $\NP$ question is
      equivalent to $\P$ vs $\dagr{\P}$\footnote{For any complexity class 
      \class{C}, $\dagr{\class{C}} = \{ \lang{L} : \ldreach{L} \in \class{C}\}$.} 
      question. This leads to the
      intriguing possibility that by proving \dagr{\P} is contained in
      some subclass of \P, we can prove an upward translation of
      separation of complexity classes. Note that we do not know a way
      to upward translate the separation of complexity classes.
  \end{itemize}
\end{abstract}


\section{Introduction}

Reachability problems in mathematical structures are a well-studied
problem in space complexity. An important example is the graph
reachability problem where given a directed graph $G$ and two special vertices
$s$ and $t$, is there a path\footnote{We follow the convention that a path can have repeated vertices and edges.} from $s$ to $t$ in the graph $G$. This
problem exactly captures the space complexity of problems solvable in
nondeterministic logarithmic space. Various restrictions of the
problem have been studied - reachability in undirected graphs
characterizes deterministic logspace \cite{Reingold05}, reachability in
constant width graphs (even undirected) characterizes $\NC^1$ \cite{Barrington89}, reachability in
planar constant width directed graphs characterizes $\ACC^0$ \cite{Hansen04} and the
version in upward planar constant width directed graphs characterizes $\AC^0$
\cite{Barrington98searchingconstant}.

A natural extension of the problem using formal language theory is the
\lreach{\lang{L}} problem: Fix a language \lang{L} defined over a
finite alphabet $\Sigma$. Given a graph whose edges are labelled by
alphabet symbols and two special vertices $s$ and $t$, test if there
is path from $s$ to $t$ in the graph such that the concatenation of
the symbols seen from $s$ to $t$ forms a string in the language
\lang{L}. Indeed, if \lang{L} is $\Sigma^*$, then the string on any
path from $s$ to $t$ will be in the language. Hence the problem
reduces to the graph reachability problem.

Although \lreach{\lang{L}} problem has not been studied from a space complexity theory
perspective, a lot is known about its complexity~\cite{Reps96,Reps98,Yanna90,Reps90,Marathe00}.  An immediate observation is that the
\lreach{\lang{L}} problem is at least as hard as the membership problem
of \lang{L}. Indeed, given a string $x$, to check for membership in \lang{L} it
suffices to test \lreach{\lang{L}} in a simple path of length $|x|$
where the edges are labelled by the symbols in $x$ in that sequence. The
literature on the problem is spread over two main themes. One is on
restricting the language from the formal language perspective, and the
other is by restricting the family of graphs in terms of structure.

An important special case of the problem that was studied is when the
language is restricted to be a context-free language (CFL). This is called
the \lreach{\class{CFL}}. A primary motivation to study this
problem is their application in various practical situations like
inter-procedural slicing and inter-procedural data flow
analysis~\cite{Reps90,Reps96,Reps98}.  These are used in code
optimization, vectorization and parallelization phases of compiler
design where one should have information about reaching definitions,
available expressions, live variables, etc. associated with the
program elements. The goal of inter-procedural analysis is to perform
static examination of above properties of a program that consists of
multiple procedures. Once a program is represented by its program
dependence graph ~\cite{Reps96}, the slicing problem is simply the
\lreach{\class{CFL}} problem.

\paragraph{\bf Our Results: } The results in this paper are in two flavors. \\
\noindent
{\bf Results based on Chomsky Hierarchy and Graph Classes: } Firstly we study restrictions of \lreach{\lang{L}} problem when
$\lang{L}$ is restricted using formal language hierarchy and the graph is
restricted to various natural graph classes. Our results on this front are listed in Table~\ref{table:FL_Reach} (for the sake of completeness, we include
some known results too). Apart from the results in Table~\ref{table:FL_Reach}, we show the following theorem for the language class \DCFL.
\begin{theorem}
\label{thm:main}
 \cdreach{DCFL} is $\LogCFL$-complete.
\end{theorem}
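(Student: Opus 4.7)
The argument splits into the standard two parts: $\cdreach{DCFL} \subseteq \LogCFL$ and $\LogCFL$-hardness witnessed by a single fixed DCFL.

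\textbf{Upper bound.} For each fixed $L \in \class{DCFL}$, I would place $\ldreach{L}$ in $\LogCFL$ using the characterization $\LogCFL = \nauxpdapoly$. Fix a DPDA $M$ recognizing $L$. On input a labeled DAG $G$ with source $s$ and sink $t$, the auxiliary pushdown machine keeps the current vertex on its $O(\log n)$ worktape and simulates $M$'s stack on its own pushdown. At each step it nondeterministically guesses an outgoing edge $(v,w)$ with label $a \in \Sigma$, deterministically updates $M$'s configuration on $a$ (clearing any pending $\varepsilon$-moves), and moves to $w$; it accepts on reaching $t$ with $M$ in an accepting configuration. Since $G$ is a DAG, every accepting computation traverses at most $|V(G)|$ edges, giving polynomial time.

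\textbf{Lower bound.} For the hardness, I would exhibit a single DCFL witnessing $\LogCFL$-hardness, taking $L$ to be a Dyck language $D$ on a fixed number of bracket pairs (a classical DCFL). The tool is the Chomsky--Sch\"utzenberger theorem: every CFL $K$ admits a representation $K = h(D \cap R)$ for a regular language $R$ and an alphabetic homomorphism $h$. Picking $K$ to be a $\LogCFL$-hard CFL (such a $K$ exists, e.g., any CFL whose uniform membership problem is $\LogCFL$-complete), I would reduce $K$-membership to $\ldreach{D}$: for an input word $w$, build a DAG $G_w$ whose vertices are pairs $(i, q)$ (position in $w$, state of the DFA for $R$) and whose edges are labeled by brackets of $D$ so that the labels along any path from $(0, q_0)$ to $(|w|, q_{\mathrm{accept}})$ are exactly those $v \in R$ with $h(v) = w$. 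Then $w \in K$ iff the label of some such path lies in $D$, completing the reduction. Since $R$, $h$, and $D$ are fixed, the construction is computable in $\AC^0$.

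\textbf{Main obstacle.} I expect the primary difficulty to be maintaining acyclicity of $G_w$. If $h$ erases some brackets, layering vertices only by position in $w$ creates cycles among the $(i, q)$ at a common $i$. I would handle this either by first rewriting the Chomsky--Sch\"utzenberger decomposition into a non-erasing form (absorbing $\varepsilon$-images into $R$), or by unrolling each position into a short acyclic gadget that simulates the $\varepsilon$-closure of the product automaton. Verifying that the resulting DAG has polynomial size and is computable in $\AC^0$ while still guaranteeing that path labels exhaust all $v \in R$ with $h(v) = w$ is the most delicate part of the argument.
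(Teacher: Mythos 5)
Your upper bound is essentially the paper's own argument: an \nauxpdapoly\ that keeps the current vertex on the worktape, nondeterministically walks the DAG edge by edge, and runs the DPDA for $L$ on the labels using the auxiliary pushdown, accepting at $t$ in an accepting configuration; acyclicity gives the polynomial time bound. No issues there.

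Your lower bound takes a genuinely different and much heavier route than the paper, and the place where it is shakiest is exactly the obstacle you flag. The paper simply reduces the $\LogCFL$-complete language \lang{NBC(D_2)} to \ldreach{D_2}: the input $x_1[x_2\#x_3][x_4\#x_5]\cdots$ becomes a series-parallel DAG in which each block is replaced by two parallel labelled paths, one per choice, so the yields of $s$--$t$ paths are precisely the strings obtainable by resolving the block choices, and the fixed DCFL is just \lang{D_2}; no homomorphism bookkeeping is needed. Your Chomsky--Sch\"utzenberger route can be made to work, but neither of your proposed patches is sound as stated. You cannot ``absorb the $\varepsilon$-images into $R$'': the symbols erased by $h$ are bracket symbols that must still occur as edge labels on the path, because the yield of the path has to be the witness $v \in D \cap R$ itself and those brackets are needed for membership in $D$; making $h$ non-erasing is a genuine strengthening of the theorem (Okhotin's non-erasing Chomsky--Sch\"utzenberger theorem), not a routine rewriting. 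Likewise an ``$\varepsilon$-closure of the product automaton'' gadget does not apply: the erased symbols are not $\varepsilon$-moves of a finite-state device, since the brackets they contribute are constrained by the non-regular Dyck condition, and a minimal witness can contain $\Theta(|w|)$ consecutive erased symbols at a single position, so a short per-position gadget cannot capture them -- you would instead have to unroll each layer to polynomial length, justified by the linear bound on witness length that comes out of the C--S construction. (You also need a minor fix to obtain a single sink, e.g.\ merging the accepting-state vertices of the last layer, which preserves acyclicity.) With the non-erasing theorem cited, or the unrolling argument carried out, your reduction is correct and shows hardness for a fixed Dyck language extracted from any hard CFL; but the paper's direct reduction from \lang{NBC(D_2)} is far simpler and already lands on the DCFL $\lang{D_2}$.
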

\begin{table}
\label{table:FL_Reach}
\begin{minipage}{\textwidth}
\centering
\caption{Formal language class restricted reachability} 
\label{table2}
\begin{tabular}{|l|l|l|l| }
\hline
\textbf{Language Class}& \sc{Tree-Reach} & \sc{DAG-Reach} & \sc{UReach/Reach} \\
\hline 
Regular & \L-complete\cite{Yanna90} & \NL-complete\cite{Yanna90} & \NL-complete \\
& & & (Theorem~\ref{thm:regureach}/\cite{Yanna90}) \\
\hline 
Linear & \NL-complete & \NL-complete & \NL-complete \\
 & (Theorem~\ref{thm:lin}) & (Theorem~\ref{thm:lin}) & (Theorem~\ref{thm:lin}) \\
\hline
Context-free & \LogCFL-complete & \LogCFL-complete  & \P-complete \\
 & (Prop.~\ref{prop:cfl-tdreach}) & (Prop.~\ref{prop:cfl-tdreach}) & (Theorem~\ref{thm:cfl-reach-pc}/\cite{UllmanG88}) \\
\hline
Context-sensitive & \PSPACE-complete & \PSPACE-complete & Undecidable \\
 & (Prop.~\ref{prop:csl-tdreach}) & (Prop.~\ref{prop:csl-tdreach}) & (\cite{Marathe00}) \\
\hline
\end{tabular}
\end{minipage}
\end{table}

\vspace{.1cm}
\noindent 
{\bf Results on the Structural Complexity front:} Now we take a complexity theoretic view, where we study
$\lreach{\lang{L}}$ as an operator on languages. It is shown in Barrett
et. al. \cite{Marathe00} that even for languages in logspace, the languages
\lreach{L} and \lureach{L} are undecidable. Therefore in this section,
we consider only DAGs. Note that for any language \lang{L}, the language
\ldreach{L} is decidable.

It is natural to ask whether increasing the complexity of \lang{L}
increases the complexity of \ldreach{L}. More concretely, does
$\lang{A}$ $\lmreduces$ $\lang{B}$ $\implies$ $\ldreach{A}$ $\lmreduces$
$\ldreach{B}$? The following theorem, along with the fact that there
exists a language \lang{L} (see Proposition~\ref{prop:cfl-tdreach}) that is $\LogCFL$-complete for which
\ldreach{L} remains $\LogCFL$-complete shows that such a result is
highly unlikely.

\begin{theorem}
\label{thm:main}
  There exists a language $\lang{A} \in \TC^0$ for which $\ldreach{A}$ is \NP-complete.
\end{theorem}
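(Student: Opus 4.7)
The plan is to take the NP-complete problem $3$\textsc{SAT} and arrange that every $s$--$t$ path in the constructed DAG starts with a structurally forced ``formula'' prefix and ends with a nondeterministically chosen ``assignment'' suffix; the language $\lang{A}$ will then only have to decide whether the suffix satisfies the prefix, a predicate that sits comfortably in $\TC^0$ (in fact in $\AC^0$).

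Concretely, fix a finite alphabet $\Sigma$ capable of encoding $3$\textsc{CNF} formulas (binary digits, a literal-sign marker, a clause separator, and a designated symbol $\#$). Define
\[
  \lang{A} \;=\; \{\, \phi \,\#\, y \ :\ \phi \text{ is a well-formed $3$\textsc{CNF} on $n$ variables},\ y \in \{0,1\}^n,\ y \models \phi\,\}.
\]
Deciding $\lang{A}$ reduces to an AND over clauses of an OR over the three literals of ``look up bit $y_i$, where $i$ is the binary index stored in the literal''. The lookup itself unfolds as an OR over $j \in \{1,\dots,n\}$ of $(i=j) \wedge y_j$, each equality being on $O(\log n)$-bit numbers. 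All of this has constant depth and polynomial size with unbounded fan-in AND/OR, placing $\lang{A}$ in $\AC^0 \subseteq \TC^0$.

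For NP-hardness, given $\phi$ I build $G_\phi$ as a deterministic ``spine'' path starting from $s$ whose edge labels spell out $\phi\,\#$, followed by $n$ ``choice diamonds'' in series, the $i^{\text{th}}$ containing two parallel edges labeled $0$ and $1$ between a common pair of vertices, ending at $t$. Every $s$-to-$t$ path in this DAG carries the label $\phi\,\#\,y$ for some $y \in \{0,1\}^n$, which lies in $\lang{A}$ iff $y$ satisfies $\phi$. Hence $\phi \in 3$\textsc{SAT} iff $(G_\phi,s,t) \in \ldreach{A}$, and the map $\phi \mapsto G_\phi$ is even $\AC^0$-computable. The NP upper bound is immediate: a path in an $N$-vertex DAG has length less than $N$, so one guesses the path nondeterministically, reads its polynomial-length label, and checks $\lang{A}$-membership in polynomial time via $\lang{A} \in \TC^0 \subseteq \P$. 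The one point demanding care is the $\TC^0$-verification of $\lang{A}$, specifically ensuring that the encoding lets the variable-index-to-assignment-bit lookup be uniformly constant-depth; this is routine but is where the encoding has to be chosen with some deliberateness. No deeper obstacle arises.
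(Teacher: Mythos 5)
Your proof is correct, but it takes a genuinely different route from the paper's. The paper uses the same overall template --- the instance is spelled out on a forced prefix path and a witness is chosen by $0/1$ choice gadgets at the end --- but it reduces from \textsc{Vertex Cover}: its language $\lang{A}$ consists of strings $w_1\#w_2\#w_3$ and checks (i) that every edge recorded in the adjacency-matrix part $w_2$ is covered by the characteristic vector $w_3$ (an $\AC^0$ check) and (ii) that $n_1(w_3)\leq n_1(w_1)$, a cardinality comparison against a unary size bound, which is precisely why the paper only places its $\lang{A}$ in $\TC^0$. Your $3$\textsc{Sat}-based language replaces the counting constraint by clause evaluation via index-to-assignment-bit multiplexer lookups, which (with a position-robust encoding, as you flag) stays in $\AC^0$; so your argument in fact yields the formally stronger conclusion that $\lang{A}$ can be taken in $\AC^0\subseteq\TC^0$, whereas the paper's threshold test genuinely uses majority gates. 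The $\NP$ upper bound (guess a path, verify the yield in polynomial time) is identical in both proofs, and both reductions are easily within logspace as the paper requires. One small technical point: your choice gadgets use two parallel edges labelled $0$ and $1$ between the same pair of vertices; if multigraphs are to be avoided, do what the paper does for its witness part (it separates the $n$ witness bits by $\#$) and realize each choice as two internally disjoint length-two paths with yields $0\#$ and $1\#$, adjusting the format of $\lang{A}$ accordingly --- this changes nothing essential.
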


\noindent For any complexity class $\class{C}$, we consider the class of
languages defined as,
\[ \dagr{\class{C}} = \{ \lang{L} : \ldreach{L} \in \class{C}\} \]
Note that for any class $\class{C}$, we have
$\dagr{\class{C}} \subseteq \class{C}$.
We have the following theorems for different choices of 
$\class{C}$.\\

\begin{theorem-a}{}
  We show the following structural theorems:
  \begin{enumerate}
  \item (Theorem~\ref{thm:pspace-np-inv}) $\dagr{\PSPACE} = \PSPACE$, $\dagr{\NP} = \NP$.
  \item (Theorem~\ref{thm:pdr}) $\P \neq \dagr{\P} \iff \P \neq \NP$.
  \item (Theorem~\ref{thm:np-nl}) $\dagr{\NL} \neq \NL \iff \NP \neq \NL$.
  \end{enumerate}
  \label{thm:struc}
\end{theorem-a}

The above theorem shows that separating $\dagr{\P}$ from $\P$ would
separate $\P$ from $\NP$.  This gives us an upward translation of
lower bounds on complexity classes if we can prove that \dagr{\P}
is contained in some subclass of \P.  Hence the question whether we
can identify some ``natural'' complexity class containing $\dagr{\P}$
becomes very interesting.  It is clear that $\dagr{\P}$ contains
\LogCFL-complete problems but is highly unlikely to contain some
problems in $\L$. If \dagr{\P} contains some \P-complete problem, then
proving that \dagr{\P} is contained in some subclass of \P\ would be
very hard. In this connection, we show the following: 

\begin{theorem}
  If $\lang{L}$ is $\P$-complete under syntactic read-once logspace
  reductions, then $\ldreach{L}$ is $\NP$-complete.
  \label{thm:rol}
\end{theorem}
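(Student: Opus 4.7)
The plan is to handle the two directions of the claim separately. For containment in $\NP$: since $L$ is \P-complete we have $L \in \P$, so $\ldreach{L} \in \NP$ by the standard recipe of nondeterministically guessing an $s$--$t$ path (of length at most $|V|$) and verifying in polynomial time that the concatenation of its edge labels belongs to $L$.

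For $\NP$-hardness, I would reduce $\ldreach{A}$ to $\ldreach{L}$, where $A \in \TC^0$ is the language from Theorem~\ref{thm:main} for which $\ldreach{A}$ is already $\NP$-complete. Because $A \in \TC^0 \subseteq \P$ and $L$ is \P-complete under syntactic read-once logspace reductions, there is a deterministic logspace transducer $M$ with a one-way read-once input head computing a function $f$ with $y \in A \iff f(y) \in L$. The key idea is to ``compose'' $M$ with the given $\ldreach{A}$ instance so that each $s$--$t$ path labeled $y$ becomes an $s'$--$t'$ path labeled $f(y)$.

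Concretely, given a DAG $G = (V, E)$ with source $s$, sink $t$, and edges labeled over $\Sigma_A$, let $\mathcal{C}$ be the set of configurations (state, work-tape contents, and head position) of $M$ on inputs of length at most $|V|$; since $M$ is logspace, $|\mathcal{C}|$ is polynomial in $|V|$. Build $G'$ on vertex set $V \times \mathcal{C}$ together with a fresh sink $t'$, and set $s' = (s, c_0)$ where $c_0$ is $M$'s initial configuration. For every edge $u \to v$ in $G$ labeled $a$ and every $c \in \mathcal{C}$, simulate $M$ from $c$ while it consumes the single input symbol $a$ and performs all internal steps up to just before it would read the next input symbol; if this segment produces output $w \in \Sigma_L^{\ast}$ and lands in configuration $c'$, install a fresh chain of edges from $(u, c)$ to $(v, c')$ spelling out $w$. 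For the end-of-input tail, for each $c \in \mathcal{C}$, simulate $M$'s computation beyond the end-marker starting from $c$ and install a chain from $(t, c)$ to $t'$ spelling its output. Since $|\mathcal{C}|$ and each segment's output are polynomial, the whole construction is polynomial time.

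Correctness reduces to a label-preserving bijection, forced by the determinism and read-onceness of $M$, between $s$--$t$ paths of $G$ and $s'$--$t'$ paths of $G'$ under $y \mapsto f(y)$; hence $G$ admits an $s$--$t$ path with label in $A$ if and only if $G'$ admits an $s'$--$t'$ path with label in $L$. The main obstacle I expect is the bookkeeping required to align $M$'s one-way pass over its input with the symbol-by-symbol traversal of $G$: fixing a clean convention for partitioning $M$'s computation into per-edge segments, absorbing the end-of-input tail cleanly via the extra sink $t'$, and certifying that no spurious $s'$--$t'$ paths are introduced. The syntactic read-once hypothesis is precisely what enables this local, per-edge simulation of the logspace transducer.
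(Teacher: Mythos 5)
Your $\NP$ membership argument is fine. The hardness direction, however, rests on an assumption that the paper's definition does not grant you: you take the syntactic read-once logspace reduction $M$ from $\lang{A}$ to $\lang{L}$ to be a transducer with a \emph{one-way} input head, consuming the input symbols left to right, one symbol per edge of the path. The definition in the paper only says that in the configuration graph of the (general, two-way) logspace reduction, every path from the start to the accepting configuration reads each input variable \emph{at most once}; it says nothing about the order of the reads, and it certainly does not synchronize them with positional order. If $M$ queries position $j$ at a moment when the path traversed so far has only determined the first $i<j$ symbols of the yield (or queries a position it has already moved past without storing), then your per-edge segmentation ``simulate $M$ from $c$ while it consumes the single input symbol $a$, up to just before the next read'' is not even well defined, and the product construction $V\times\mathcal{C}$ collapses. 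This is exactly the difficulty the paper's proof is designed to avoid: there one reduces an arbitrary $\lang{V}\in\NP$ to $\ldreach{L}$ by taking the configuration graph of the read-once reduction from the witness language $\lang{W}$ to $\lang{L}$ on input $(x,w)$ with $x$ fixed and $w$ unread, labelling each edge by the output symbol of that step; a branch occurs at each read of an unknown witness bit, and read-once-ness is used only to guarantee that every path corresponds to a \emph{consistent} assignment to $w$ --- no assumption whatsoever about the order in which bits are read. So either you must first prove that syntactic read-once reductions can be assumed one-way (this does not follow from the stated definition), or you must handle out-of-order reads, which essentially pushes you back to the paper's configuration-graph argument.

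Two smaller points, fixable but currently unaddressed: (i) a per-edge segment of $M$ may produce empty output, and you cannot install an unlabelled edge from $(u,c)$ to $(v,c')$ in a \ldreach{L} instance, so you need an explicit $\epsilon$-elimination (closure) step that does not create spurious $s'$--$t'$ paths; (ii) the paper's convention is that hardness is with respect to logspace reductions, so you should check that your product construction (including the simulation of the per-edge segments and the chains) is computable in logspace, not merely in polynomial time. Note also that your route, unlike the paper's, depends on Theorem~\ref{thm:main}; that is legitimate, but it makes the argument strictly less self-contained than the one the paper gives.
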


If we are able to extend the above theorem to all types of reductions, then it implies that, assuming $\NP$ is not contained in $\P$, $\dagr{\P}$ is unlikely to contain $\P$-complete problems. In other words, the above theorem could be interpreted as evidence (albeit very weak evidence) that \dagr{\P} may indeed be contained in some subclass of
\P. 

We also remark that Theorem~\ref{thm:pdr} holds with
\NL\ instead of \P. However, since \dagr{\NL} contains \NL-complete (under logspace
reductions) languages, Theorem~\ref{thm:np-nl} is not as promising (as Theorem~\ref{thm:pdr}).

A preliminary version of this with a subset of results appears \cite{KomarathSS14}. In \cite{KomarathSS14}, we proved that there is a language $A$ in {\em logspace} such that $\ldreach{A}$ is $\NP$-complete. In this extended version, we improve this bound to $\TC^0$ (from logspace, see Theorem~\ref{thm:main}).

\section{Preliminaries}
In this section, we define language restricted reachability problems
and make some observations on their complexity. The definitions for
standard complexity classes and their complete problems that we are
using in this paper can be found in standard complexity theory
textbooks \cite{arora-barak}. We use \L\ and \NL\ to stand for the
complexity classes deterministic logspace and nondeterministic
logspace respectively. All reductions (even ones used for defining
completeness) in this paper are in logspace unless mentioned
otherwise.

\begin{definition}

For any language $\lang{L} \subseteq \Sigma^{*}$, we consider graph $G$ where each
edge in $G$ is labelled by an element from $\Sigma$. For any path in
$G$ we define the yield of the path as the string formed by
concatenating the symbols found in the path in that order. Then we
define the language \lreach{L} as the set of all $(G, s, t)$ such that
there exists a path from $s$ to $t$ in $G$ with yield in
$\lang{L}$.

\label{def:lreach}
\end{definition}

By restricting the graph in Definition~\ref{def:lreach}, we obtain
similar definitions for \ldreach{L} (DAGs), \lureach{L} (Undirected
Graphs) and \ltreach{L} (Orientations of Undirected Trees).

Let $\Sigma$ and $\Gamma$ be finite alphabets.  A function $f$ from
$\Sigma^*$ to $\Gamma^*$ is called a projection if for all $x \in
\Sigma^*$, the string $f(x) = y$ is such that for all $i \in [m]$,
either $y_i = x_j$ for some $j \in [n]$ or $y_i = 0$ or $y_i = 1$,
where $m = |y|$ and $n = |x|$. A language \lang{L} over $\Sigma$ is
said to be projection reducible to a language \lang{L'} over $\Gamma$ if there is
a projection $f$ such that $x \in L \iff y = f(x) \in$ \lang{L'}
and $|y|$ is polynomial in $|x|$.

\begin{observation}
 Any language \lang{L} is projection reducible to \ltreach{L}.
\label{obs:L-to-Lreach}
\end{observation}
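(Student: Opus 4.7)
The plan is to realize $x$ itself as the yield of a directed path, which is an orientation of a tree. Concretely, given $x = x_1 \cdots x_n \in \Sigma^*$, I will produce the instance $(G, s, t)$ where $G$ has vertex set $\{v_0, v_1, \ldots, v_n\}$, one directed edge $(v_{i-1}, v_i)$ labeled $x_i$ for each $i \in [n]$, and $s = v_0$, $t = v_n$. The underlying undirected graph is a simple path, so $G$ is a valid input to $\ltreach{\lang{L}}$. Since the unique directed $s$--$t$ walk in $G$ has yield exactly $x$, we immediately get $x \in \lang{L} \iff (G, s, t) \in \ltreach{\lang{L}}$.

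The remaining work is to verify that the map $x \mapsto (G, s, t)$ can be written as a projection under some natural encoding of labeled-graph instances. I would fix an encoding in which the output consists of an adjacency block listing the edges in a canonical order, followed by a label block giving each edge's label, followed by the bit strings naming $s$ and $t$. For an input of length $n$, the graph shape and the identities of $s, t$ depend only on $n$, so every bit of the adjacency block and of the $s, t$ identifiers is a constant ($0$ or $1$) in the sense of the projection definition. The label block simply lists $x_1, \ldots, x_n$ in order, so each of those output positions is a literal copy of a single input symbol $x_j$. The total output length is $O(n)$, which is polynomial in $|x|$ as required.

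The one minor subtlety is the mismatch between the input alphabet $\Sigma$ and the output alphabet $\Gamma$ demanded by the encoding, since $\Gamma$ must carry both the $\{0,1\}$ structural bits and arbitrary labels drawn from $\Sigma$. I would handle this by taking $\Gamma \supseteq \{0,1\} \cup \Sigma$ so that every output position is either a constant from $\{0,1\}$ or a direct copy of some input symbol $x_j$, matching the definition of a projection verbatim. Apart from this bookkeeping, the observation is immediate; I do not anticipate any real obstacle, and the reduction is in fact computable by an $\AC^0$ circuit.
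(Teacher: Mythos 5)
Your construction is exactly the paper's (implicit) argument: the paper justifies this observation by noting that membership of $x$ in \lang{L} reduces to \lreach{L} on a simple path of length $|x|$ whose edges are labelled by the symbols of $x$ in order, and your careful check that the structural bits are constants and the label bits are copies of input symbols correctly fills in why this map is a projection. The proposal is correct and takes essentially the same route as the paper.
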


Clearly, the above observation holds for any reachability variant
based on the graph. This is because \ltreach{L} is a restriction of the other reachability variants. In fact the following observation shows that
\ltreach{L} is not much harder than \lang{L}.

\begin{observation}
For any language \lang{L}, the language \ltreach{L} is logspace
reducible to \lang{L}.
\label{obs:Ltreach-to-L}
\end{observation}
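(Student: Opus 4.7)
The plan is to construct, in logspace, a many-one reduction $f$ from $\ltreach{L}$ to $\lang{L}$. The central combinatorial observation is that since the underlying undirected graph of $T$ is a tree and every orientation of a tree is acyclic, every directed path in $T$ is simple; hence any directed $s$-$t$ path in $T$ must trace the unique simple undirected $s$-$t$ path $\pi = (v_0 = s, v_1, \ldots, v_k = t)$ with every edge oriented from $v_i$ to $v_{i+1}$. Therefore $(T, s, t) \in \ltreach{L}$ iff $\pi$ is consistently oriented from $s$ to $t$ and its yield lies in $\lang{L}$.

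The function $f$ will output the yield of $\pi$ if $\pi$ is consistently oriented from $s$ to $t$, and a fixed string $w_0 \notin \lang{L}$ otherwise. To produce the yield in logspace, I would use that undirected tree distances $d(s, v)$ and $d(v, t)$ lie in $\L$; with these in hand, $v_i$ is the unique vertex satisfying $d(s, v_i) = i$ and $d(s, v_i) + d(v_i, t) = d(s, t)$, which a logspace machine locates by enumerating over vertices. For each $i = 0, \ldots, d(s, t) - 1$, $f$ locates $v_i$ and $v_{i+1}$, checks that the edge between them is oriented from $v_i$ to $v_{i+1}$, and appends its label to the output; if the orientation check ever fails, $f$ instead emits $w_0$ and halts.

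Correctness is immediate from the combinatorial observation, and the logspace bound follows from standard logspace manipulations on trees. The only subtlety, and the nearest thing to an obstacle, is the degenerate case $\lang{L} = \Sigma^*$, in which no string $w_0 \notin \lang{L}$ exists; but then $\ltreach{L}$ coincides with directed reachability on tree orientations and is itself in $\L$, so the intended conclusion (that $\ltreach{L}$ is no harder than $\lang{L}$ up to logspace) still holds.
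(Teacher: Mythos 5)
Your proposal is correct and follows essentially the same route as the paper, which simply observes that the unique $s$--$t$ path in a tree (and hence its yield) can be found in logspace and then fed to a decision procedure for \lang{L}. Your write-up is in fact somewhat more careful than the paper's one-line justification: you explicitly verify the orientation of the unique undirected path, map inconsistent instances to a fixed non-member $w_0$, and note the degenerate case $\lang{L}=\Sigma^*$ where no such $w_0$ exists -- details the paper glosses over but which do not change the underlying idea.
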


Observation~\ref{obs:Ltreach-to-L} holds because in logspace we can
find the unique path (and hence its yield) from $s$ to $t$ in some tree and run the
algorithm for \lang{L} on the yield.

\vspace{.1cm}
Next we define classes of languages based on language restricted
reachability. 

\begin{definition}
For any class of languages $\class{C}$, we define the set of languages \creach{C} as the class of all languages \lreach{L} where $L$ is in $\class{C}$.
\label{def:creach}
\end{definition}

\noindent Again, by restricting graphs in Definition~\ref{def:creach}, we obtain
similar definitions for \cdreach{C}, \cureach{C} and \ctreach{C}.

\begin{definition}
For a class of languages \class{C} and a complexity class \class{D}, 
we say that \emph{\creach{C} is complete for \class{D}} if the following conditions 
are satisfied.

\begin{itemize}
\item For all $\lang{L} \in \class{C}$, the language \lreach{L} is in \class{D}. 
\item There exists a language \lang{L} in \class{C} such that the language \lreach{L}
  is hard for \class{D}.
\end{itemize}
\end{definition}

\begin{definition}
For any complexity class \class{C}, we define \dr{C} as the set of all
languages \lang{L} such that \lreach{L} is in \class{C}.
\label{def:rinv}
\end{definition}

Again, by restricting graphs in Definition~\ref{def:rinv}, we obtain
similar definitions for \dagr{C}, \ur{C} and \treer{C}.

Note that by Observation~\ref{obs:L-to-Lreach}, for any class
\class{C} the relations $\dr{C} \subseteq$ $\dagr{C} \subseteq$ $\treer{C}
\subseteq \class{C}$ holds. In this paper, we will be mainly studying
\dagr{C} for many interesting complexity classes \class{C}.

Our motivation in studying \dagr{C} is that it seems that it may be
helpful in proving upward translation of separation of complexity
classes. Note that we already know, by a standard padding argument,
how to translate separations of complexity classes downwards. For
example, we know that $\NEXP \neq \EXP \implies \P \neq \NP$. The
central question that we address is the following - For a class
\class{C}, what is the complexity of \dagr{C}? Clearly \dagr{C} is
contained in \class{C}. But for many natural complexity classes
\class{C}, \class{D} and \class{E}, if we can show that if \dagr{C} is
contained in some subclass \class{D} of \class{C}, then separating
\class{C} and \class{D} is equivalent to separating \class{C} from
some complexity class \class{E} that contains \class{C}.

We use \class{REG}, \class{CFL} and \class{CSL} to stand for
well-known formal language classes of regular, context-free and
context-sensitive languages respectively\cite{hopcroft-ullman}. The
formal language class \class{LIN}, called the set of all linear
languages, is the set of all languages with a context-free grammar
where the right-hand side of each production consists of at most one
non-terminal. The class \class{LIN} can also be characterized as
\class{CFL}s that can be decided by 1-turn PDAs (sub-family of PDAs
where for any computation, the stack height switches only once from 
non-decreasing mode to non-increasing mode).

We now state a known result with its proof idea which will be used
later in the paper.

\begin{proposition}[\cite{Reps98}]
  \creach{CFL} is in \P.
\label{prop:CFLinP}
\end{proposition}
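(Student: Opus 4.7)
The plan is to give a polynomial-time dynamic programming algorithm of the same flavor as the CYK parser, but adapted to work on a graph instead of on a string. Fix a language $\lang{L} \in \class{CFL}$ and let $\mathcal{G} = (N, \Sigma, P, S)$ be a context-free grammar for $\lang{L}$; without loss of generality I assume $\mathcal{G}$ is in Chomsky Normal Form, so every production is of the form $A \to BC$, $A \to a$, or (if $\varepsilon \in \lang{L}$) $S \to \varepsilon$. Given an input instance $(G, s, t)$ with $G = (V, E)$, the goal is to decide whether some path from $s$ to $t$ has yield in $\lang{L}$.

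I would compute the ternary relation $R \subseteq V \times N \times V$ defined as: $(u, A, v) \in R$ iff there is a walk from $u$ to $v$ in $G$ whose yield is derivable from $A$ in $\mathcal{G}$. The relation $R$ is accumulated by a straightforward saturation procedure:
\begin{enumerate}
\item Initialize $R$ with $(u, A, v)$ for every edge $(u,v) \in E$ labelled $a$ such that $A \to a \in P$, and with $(u, S, u)$ for every $u \in V$ in case $S \to \varepsilon \in P$.
\item Repeatedly, for every production $A \to BC \in P$ and every triple of vertices $u, w, v$ with $(u, B, w), (w, C, v) \in R$, add $(u, A, v)$ to $R$.
\item Accept iff $(s, S, t) \in R$ at the fixed point.
\end{enumerate}
The algorithm accepts $(G, s, t)$ iff there is a path from $s$ to $t$ whose yield lies in $\lang{L}$; correctness follows by a routine induction on the length of a derivation (soundness) and on the derivation tree for a witnessing path's yield (completeness).

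For the complexity bound, note that $|R| \le |V|^2 \cdot |N|$, so the saturation terminates in polynomially many rounds, each of which can be carried out by scanning all productions and all triples $(u, w, v)$ in time $O(|V|^3 \cdot |P|)$. Since $\mathcal{G}$ is fixed, this is $O(|V|^3)$ per round and a polynomial total running time, placing $\lreach{L}$ in $\P$. As $\lang{L}$ was an arbitrary element of $\class{CFL}$, we obtain $\creach{CFL} \subseteq \P$, as required.

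There is no real obstacle in the argument: the only mildly delicate point is making sure the saturation is phrased so that a single fixed-point round can be implemented in polynomial time, which is handled by iterating over pairs of triples sharing a midpoint $w$ for each binary production $A \to BC$.
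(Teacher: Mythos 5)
Your proposal is correct and is essentially the paper's own argument: the paper's sketch maintains, for each pair of vertices $u,v$, the set $Y[u,v]$ of non-terminals deriving the yield of some $u$--$v$ path, which is exactly your relation $R$ computed by saturation over a (Chomsky Normal Form) grammar. The only difference is that you spell out the fixed-point rounds and the polynomial bound, which the paper leaves implicit.
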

\begin{proof}{(Sketch)}
  The proof is a dynamic programming algorithm. The algorithm maintains
  for each pair of vertices $u$ and $v$ a table entry $Y[u, v]$ such
  that $Y[u, v]$ is the set of all non-terminals $V$ in the grammar such
  that there is a path from $u$ to $v$ with yield that can be derived
  from $V$. The algorithm can be modified to output the derivation for
  $x$ where $x \in \lang{L}$ is the yield of a path from $s$ to
  $t$. Note that this implies that for all ``yes'' instances there
  exists a string with length of the derivation at most polynomial in
  the size of the graph.
\end{proof}

Sudborough \cite{Sudborough78} studied the class of languages
logspace reducible to a \class{CFL}. This class is called
$\LogCFL$. Sudborough \cite{Sudborough78} also showed that $\LogCFL$\ can
be characterized as the set of all languages accepted by an
\nauxpdapoly. An \nauxpdapoly\ is an NTM with a read-only input tape
and a logspace read-write work tape. It also has a pushdown stack
available for auxiliary storage. The machine is allowed to run only
for a polynomial number (in the input length) of steps. It is also
known that the language \lang{NBC(D_2)} (Nondeterministic block choice
Dyck$_2$) is complete for the class $\LogCFL$. The language \lang{NBC(D_2)}
consists of all strings of the form
$x_1[x_2\#x_3][x_4\#x_5]\ldots[x_k\#x_{k+1}]$ where each $x_i$ is a
string of two types of parentheses. The string between ``['' and ``]''
is called a block and the symbol \# separates choices in a block. A string
is in the language \lang{NBC(D_2)} if and only if there is a choice of $x_i$'s from each block such
that the final string (after all choices have been made) is in $D_2$.

\section{Formal Language Class restricted Reachability}

We know that \creach{REG} is in \NL~\cite{Yanna90}. The algorithm works by
constructing the product automata of the input graph and the DFA for
the regular language. The problem then reduces to the reachability
problem on the product automata. One problem with this approach is
that even if the input graph is an undirected graph, the product
automata will be a directed graph. We know that reachability in
directed graphs is harder than reachability in undirected graphs. The
following theorem shows that for regular languages, restricted directed
and undirected reachability are equivalent.

\begin{theorem}
  If \lang{L} is the regular language \lang{L((ab)^{*})} over the alphabet $\{a,b\}$
  then \lureach{L} is \NL-complete.
  \label{thm:regureach}
\end{theorem}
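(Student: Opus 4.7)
The plan is to show membership in $\NL$ and then $\NL$-hardness.

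For the upper bound, observe that $\creach{REG}$ is already known to lie in $\NL$ via the product-automaton construction: we take the product of the input graph with the DFA for $\lang{L}$ and check reachability. This construction applies equally well when the input graph is undirected; we simply view each undirected edge labelled $\sigma$ as two directed edges (one in each direction) labelled $\sigma$, and reachability in the resulting directed product graph is in $\NL$.

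For $\NL$-hardness, I would reduce from directed $s$-$t$ reachability, which is $\NL$-complete. Given a directed graph $G=(V,E)$ with distinguished vertices $s,t$, I construct an undirected labelled graph $G'$ as follows: for every directed edge $(u,v)\in E$ I introduce a fresh intermediate vertex $w_{uv}$ and add two undirected edges $\{u,w_{uv}\}$ labelled $a$ and $\{w_{uv},v\}$ labelled $b$. The source and sink remain $s$ and $t$. This construction is clearly a logspace (in fact $\AC^0$) reduction.

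The correctness argument is the main thing to verify carefully. Consider any walk in $G'$ starting at $s$ whose yield lies in $\lang{L((ab)^*)}$. The first symbol read must be $a$, so the first step leaves $s$ along an edge $\{s,w_{sv}\}$ for some directed edge $(s,v)\in E$. The next symbol must be $b$; the only $b$-edges incident to the intermediate vertex $w_{sv}$ are $\{w_{sv},v\}$ (the edge $\{s,w_{sv}\}$ is labelled $a$), so the walk must proceed to $v$. Inductively, whenever the walk is at an ``original'' vertex $u\in V$ and must read $ab$, it is forced to traverse some gadget corresponding to a directed edge $(u,x)\in E$ and end up at $x$. Hence every such walk projects to a directed walk from $s$ to $t$ in $G$. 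Conversely, any directed walk in $G$ immediately lifts to a walk in $G'$ whose yield is in $(ab)^*$. This establishes the reduction and completes the proof.

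The only subtle point, and the place where I would be careful, is that the paper's convention allows walks with repeated vertices and edges; the argument above does not require simplicity, since the forcing at each intermediate vertex $w_{uv}$ is local and depends only on the next symbol to be read. No issues arise from parallel directed edges either, since each one gets its own intermediate vertex.
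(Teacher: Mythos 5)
Your proposal is correct and matches the paper's argument: the paper uses exactly the same reduction from directed reachability, subdividing each directed edge $(u,v)$ into an undirected $a$-edge and $b$-edge through a fresh intermediate vertex, with membership in \NL\ following from the standard product-automaton construction for \creach{REG}. Your explicit forcing argument just spells out the "easy to see" correctness step that the paper leaves informal.
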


\begin{proof}
  To show that \lureach{L} is \NL-hard, we give a logspace reduction
  from \lang{REACH}. Given an instance $(G, s, t)$ of \lang{REACH} we
  construct an instance $(G', s, t)$ of \lureach{L} where $G'$
  is a labelled undirected graph where each edge is labelled either
  $a$ or $b$. The vertex set of $G'$ is given by $V(G') = V(G)
  \cup \{ m_{uv} : (u, v) \in E(G) \}$. For each edge $(u, v) \in
  E(G)$, we add two undirected edges $\{u, m_{uv}\}$ labelled $a$ and
  $\{m_{uv}, v\}$ labelled $b$ to $E(G')$. It is easy to see that
  any directed path from $s$ to $t$ corresponds to a path from $s$ to
  $t$ in $G'$ labelled by a string in \lang{L} and vice versa.
\end{proof}

So we know that \creach{REG} is \NL-complete and \creach{CFL} is at
least as hard as $\LogCFL$. So it is interesting to consider the
complexity of \creach{LIN}. We know that $\class{REG} \subseteq$
$\class{LIN} \subseteq \class{CFL}$ in the formal language theory
setting. The following theorem shows that \creach{LIN} is equivalent
to \creach{REG}.

\begin{theorem}
  \ctreach{LIN}, \cdreach{LIN}, \cureach{LIN} and \creach{LIN} are all
  $\NL$-complete.
  \label{thm:lin}
\end{theorem}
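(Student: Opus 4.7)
I would prove matching \NL\ upper and lower bounds for each of the four variants.

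\textbf{Upper bound.} It suffices to show $\creach{LIN} \subseteq \NL$, since the other three variants are restrictions. Given a linear grammar $(N, \Sigma, P, S)$ for a fixed $\lang{L}$ and an input $(H, s, t)$, my \NL\ algorithm maintains a triple $(u, A, v) \in V(H) \times N \times V(H)$, initialized to $(s, S, t)$, representing the obligation ``there is a path from $u$ to $v$ in $H$ whose yield is derivable from $A$''. At each step the algorithm nondeterministically selects a production of $A$: for $A \to \alpha B \beta$ with $\alpha, \beta \in \Sigma^*$ and $B \in N$, it guesses $u', v' \in V(H)$, verifies in logspace (since $|\alpha|$ and $|\beta|$ are constants fixed by the grammar) that there is a path from $u$ to $u'$ with yield $\alpha$ and a path from $v'$ to $v$ with yield $\beta$, and updates to $(u', B, v')$; for a terminal production $A \to \gamma$, it verifies the existence of a path from $u$ to $v$ with yield $\gamma$ and accepts. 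The state uses $O(\log |V(H)|)$ bits; and since the intersection of a 1-turn NPDA language with the reachability (regular) language of $H$, when non-empty, admits a polynomial-length shortest witness, a logspace step-counter guarantees polynomial-time halting.

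\textbf{Lower bound.} For $\cureach{LIN}$ (and hence $\creach{LIN}$), \NL-hardness follows immediately from Theorem~\ref{thm:regureach}, since the witness language $(ab)^*$ lies in $\class{REG} \subseteq \class{LIN}$. For $\cdreach{LIN}$, an analogous directed edge-subdivision gadget (each directed edge $(u,v)$ replaced by $u \to m_{uv} \to v$ labelled $ab$) reduces the \NL-complete DAG reachability problem, again using $(ab)^*$. The subtlest case is $\ctreach{LIN}$: because $\ctreach{REG}$ is only \L-complete, no regular witness language can suffice, so a genuinely non-regular linear language is required. Here I would exhibit a specific linear language $\lang{A}$ whose membership problem is \NL-hard; by Observation~\ref{obs:L-to-Lreach}, $\lang{A}$ projection-reduces to $\ltreach{A}$, which therefore inherits \NL-hardness. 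To construct such an $\lang{A}$, I would encode each reachability instance $(G, s, t)$ as a palindromic-style string so that a 1-turn NPDA recognizing $\lang{A}$ verifies reachability by pushing a guessed path onto its stack during the push phase and matching it against the edge-encoding carried by the suffix during the pop phase.

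\textbf{Main obstacle.} The principal difficulty is the construction of the \NL-hard linear language $\lang{A}$: the ``prefix--suffix'' matching structure imposed by linearity (equivalently, the single push-then-pop regime of a 1-turn NPDA) is in tension with the sequential, step-by-step nature of reachability certificates. The key trick is to arrange the encoding of an \NL\ computation in a palindromic layout so that its validity becomes verifiable in a single push-then-pop pass over the input, which is strictly more delicate than the left-to-right scan that regular languages can perform.
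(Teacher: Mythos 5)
Your \NL\ upper bound is correct and is a slightly different, more self-contained route than the paper's: you run a direct \NL\ machine on triples $(u,A,v)$ (effectively the product of the graph with the linear grammar, an ``outside-in'' simulation of a 1-turn derivation), whereas the paper bounds the witness path length via the dynamic-programming argument of Proposition~\ref{prop:CFLinP} and then invokes Sudborough's characterization of \NL\ as polynomial-time logspace 1-turn \class{AuxPDA}s. Your hardness arguments for \creach{LIN}, \cureach{LIN} and \cdreach{LIN} via the regular language $(ab)^*$ are also fine (and for \cureach{LIN} arguably cleaner than the paper's blanket appeal to Observation~\ref{obs:L-to-Lreach}, since it sidesteps any worry about back-and-forth walks in undirected instances).

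The gap is in the case you yourself flag as the crux, \ctreach{LIN}: everything there rests on exhibiting an \NL-hard \emph{linear} language $\lang{A}$, and your sketched construction does not work as described. A 1-turn NPDA cannot hold a vertex name (which has $\Theta(\log n)$ bits) in its finite control, and comparisons between the stack and the input destroy the stack contents as they are made; so if the machine ``pushes a guessed path'' $v_0,\ldots,v_k$ and the suffix carries a single copy of the edge list, it cannot verify the $k$ chaining conditions $(v_{i},v_{i+1})\in E$: each vertex is needed for two consecutive checks but can be popped only once, a failed match against a candidate edge cannot be undone, and one left-to-right pass over the suffix cannot revisit the edge encoding for each of up to $n$ checks. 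The known constructions resolve exactly this tension by having the \emph{reduction} (not the PDA) write out the candidate path together with suitably replicated and mirrored copies of the adjacency information, so that every consistency check becomes a local palindromic match; establishing that the resulting language is both linear and \NL-hard is the genuinely nontrivial content, and it is precisely the result the paper imports by citing Sudborough's theorem that \class{LIN} contains an \NL-complete language. So either cite that result (as the paper does; your appeal to Observation~\ref{obs:L-to-Lreach} then finishes the tree case exactly as in the paper) or supply the full construction --- as written, the ``key trick'' you name is asserted rather than proved, and the naive division of labor (PDA guesses the path, input carries one copy of the graph) is the step that fails.
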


\begin{proof}
  There is an \NL-complete language in \class{LIN} \cite{Sudb75}. The
  hardness follows from this fact and Observation~\ref{obs:L-to-Lreach}. Now we show that all these problems
  are in \NL. The Dynamic Programming algorithm for \creach{CFL} 
  from Proposition~\ref{prop:CFLinP} runs in poly-time and
  produces a polynomial length derivation for the output string
  (string yielded by the path). For any language in \class{LIN}, a
  polynomial length derivation can only produce a polynomial length
  string (and hence polynomial length path). Let us say that the
  length of the path is bounded by $n^{k}$ where $n$ is the size of
  the graph and $k$ is a constant. Then our algorithm will search for
  a path of length at most $n^{k}$ by nondeterministically guessing
  the next vertex at each step and simultaneously parsing the string
  at each step (using a 1-turn \class{PDA}.). This can be implemented
  by a 1-turn \class{AuxPDA} that runs in time $n^{k}$ and takes
  $\log(n)$ space. Sudborough \cite{Sudb75} proved that this class is exactly
  the same as \NL.
\end{proof}

The following theorem shows that for solving reachability for
\class{DCFL}s (which are nondeterministic), some nondeterminism is
unavoidable.

\begin{theorem}
  \cdreach{DCFL} is $\LogCFL$-complete.
  \label{thm:dcfl-reach}
\end{theorem}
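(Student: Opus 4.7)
The statement has two halves: every $\lang{L} \in \class{DCFL}$ has $\ldreach{L} \in \LogCFL$, and there is some $\lang{L} \in \class{DCFL}$ for which $\ldreach{L}$ is $\LogCFL$-hard. I will use the $\nauxpdapoly$ characterization of $\LogCFL$ for the upper bound and reduce from $\lang{NBC(D_2)}$ for the lower bound.

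\textbf{Upper bound.} Fix any $\lang{L} \in \class{DCFL}$ and let $M$ be a \class{DPDA} deciding $\lang{L}$. I build an \nauxpdapoly\ $N$ for $\ldreach{L}$ as follows. On input $(G, s, t)$, the machine $N$ keeps on its logspace work tape the identifier of a ``current vertex'' $u$ (initialized to $s$) together with the current state of $M$, while using its pushdown stack to hold the stack of $M$. At each step $N$ nondeterministically guesses an outgoing edge $(u, v)$ of $u$, feeds its label to $M$ (one deterministic transition, possibly with $\varepsilon$-moves), and replaces $u$ by $v$. It accepts when $u = t$ and $M$ is in an accepting configuration. Since $G$ is a \emph{DAG} and our convention disallows repeated vertices along a path inside a DAG, every candidate path has length at most $|V(G)|$; consequently $N$ runs in polynomial time. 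Storing a vertex name and an $M$-state uses $O(\log n)$ work space, and the pushdown is used exactly as $M$ would use it, so $N$ is an \nauxpdapoly. By Sudborough's theorem this places $\ldreach{L}$ in $\LogCFL$.

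\textbf{Lower bound.} I claim $\ldreach{D_2}$ is already $\LogCFL$-hard, which suffices since $D_2$ (the Dyck language on two bracket pairs) is in \class{DCFL}. I reduce from the $\LogCFL$-complete language $\lang{NBC(D_2)}$ recalled in the preliminaries. Given an input
\[
w \;=\; x_1[x_2\#x_3][x_4\#x_5]\cdots[x_{2k}\#x_{2k+1}],
\]
I construct, in logspace, a labelled DAG $G_w$ with the following ``ladder'' shape. First create a simple directed path $s = v_0 \to v_1 \to \cdots \to v_{|x_1|}$ whose edges are labelled by the successive symbols of $x_1$; let $w_0 = v_{|x_1|}$. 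Then, for each block $[x_{2i}\#x_{2i+1}]$ in order, create two vertex-disjoint simple directed paths from $w_{i-1}$ to a fresh vertex $w_i$, one labelled by $x_{2i}$ and the other by $x_{2i+1}$. Set $t = w_k$. Since every edge points ``forward'' in the ladder, $G_w$ is a DAG. The yields of $s$--$t$ paths in $G_w$ are exactly the strings obtainable from $w$ by picking one side of each block, which by definition is the set of strings whose membership in $D_2$ witnesses $w \in \lang{NBC(D_2)}$. Hence $w \in \lang{NBC(D_2)}$ iff $(G_w, s, t) \in \ldreach{D_2}$, completing the reduction.

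\textbf{Main obstacle.} The upper bound is essentially bookkeeping once one notices that ``DAG'' bounds path length by $n$, forcing polynomial running time; the delicate point is the lower bound, where one must pick a \class{DCFL} rich enough to encode block choices. The choice of $D_2$ together with the ladder gadget works because the nondeterminism in $\lang{NBC(D_2)}$ is cleanly localised to independent block-level alternatives, and those alternatives map bijectively onto the branching choices offered by the DAG; any attempt to encode a genuinely nondeterministic \class{CFL} by cleverness inside the language would defeat the purpose, so pushing all nondeterminism into the graph is the key design decision.
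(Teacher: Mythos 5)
Your proposal is correct and follows essentially the same route as the paper: an \nauxpdapoly\ that guesses the $s$--$t$ path while deterministically simulating the \class{DPDA} on the edge labels for membership in $\LogCFL$, and a reduction from $\lang{NBC(D_2)}$ to $\ldreach{D_2}$ via the series-parallel ``ladder'' of per-block parallel paths for hardness (your explicit construction is exactly the paper's Figure~\ref{fig:dcfldagreach}). One small wording slip: the paper's convention does allow repeated vertices on paths; the length bound $|V(G)|$ holds simply because acyclicity of a DAG forbids revisiting a vertex, not because of the convention.
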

\begin{proof} Let $\lang{L} \in \class{DCFL}$. We will describe an
  \nauxpdapoly\ that decides the language \ldreach{L}. The machine
  starts with the source vertex $s$ as the current vertex. At each
  step it nondeterministically moves to an out-neighbor of the
  current vertex. When the machine takes the edge $(u, v)$ it executes
  one step of the DPDA for \lang{L}, using the stack and finite
  control, with the label on $(u, v)$ as the current input symbol.
  The machine accepts iff it reaches $t$ and the DPDA accepts.

  For hardness, we reduce \lang{NBC(D_2)} to \cdreach{DCFL}. The
  reduction results in a series-parallel graph as shown in
  Figure~\ref{fig:dcfldagreach}. In the figure, a dashed arrow
  represents a simple path labelled by the given string. Note that the
  language \lang{D_2} is in \class{DCFL}.
\end{proof}
\begin{figure}
\centering
\includegraphics[scale=0.75]{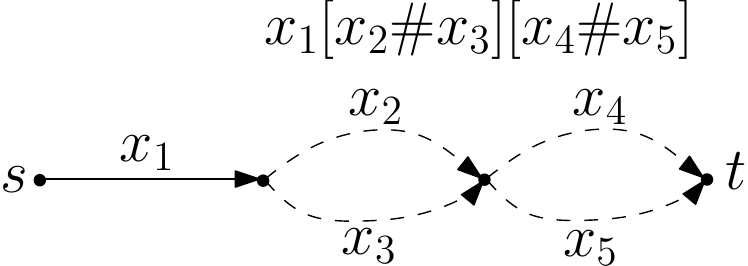}
\caption{Reducing \lang{NBC(D_2)} to \cdreach{DCFL}}
\label{fig:dcfldagreach}
\end{figure}
\begin{proposition}
\ctreach{CFL} and \cdreach{CFL} are $\LogCFL$-complete.
\label{prop:cfl-tdreach}
\end{proposition}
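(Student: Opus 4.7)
The plan is to handle the two claims by proving their upper and lower bounds separately, leveraging the observations and theorems already established in the paper.

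\textbf{Upper bound for \ctreach{CFL}.} By Observation~\ref{obs:Ltreach-to-L}, for any $\lang{L} \in \class{CFL}$ the language $\ltreach{L}$ is logspace reducible to $\lang{L}$ itself (find the unique path in the oriented tree in logspace and invoke a decider for $\lang{L}$ on its yield). Since every context-free language lies in $\LogCFL$ and $\LogCFL$ is closed under logspace reductions, this gives $\ltreach{L} \in \LogCFL$.

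\textbf{Upper bound for \cdreach{CFL}.} I would adapt the \nauxpdapoly\ construction from the proof of Theorem~\ref{thm:dcfl-reach}, replacing the DPDA for $\lang{L} \in \class{DCFL}$ by a (nondeterministic) PDA for $\lang{L} \in \class{CFL}$. The machine starts at $s$, stores the current vertex on the work tape in $O(\log n)$ bits, repeatedly guesses an out-neighbor nondeterministically, and feeds the corresponding edge label as the next input symbol into the PDA simulated on the stack; it accepts iff $t$ is reached with the PDA in an accepting configuration. Acyclicity forces every $s$--$t$ walk to be a simple path of length at most $n$, so the machine halts in polynomial time, keeping us within the \nauxpdapoly\ (hence $\LogCFL$) resource bound.

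\textbf{Lower bounds.} For \cdreach{CFL}, $\LogCFL$-hardness is immediate from Theorem~\ref{thm:dcfl-reach} together with the inclusion $\class{DCFL} \subseteq \class{CFL}$. For \ctreach{CFL}, Greibach's ``hardest CFL'' theorem supplies a context-free language $\lang{L}_0$ that is $\LogCFL$-complete under logspace reductions; composing that reduction with the projection from $\lang{L}_0$ to $\ltreach{L_0}$ furnished by Observation~\ref{obs:L-to-Lreach} yields the required logspace reduction.

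\textbf{Main obstacle.} Virtually all of the conceptual content sits in the \nauxpdapoly\ upper bound for DAGs: the key ingredient is that acyclicity caps the length of any $s$--$t$ path at $n$, which is precisely what allows the simultaneous path-guessing plus PDA-simulation to fit within the \nauxpdapoly\ bound. For unrestricted graphs this argument breaks, which is consistent with \creach{CFL} being $\P$-complete (Theorem~\ref{thm:cfl-reach-pc}) rather than $\LogCFL$-complete, and explains why the tree/DAG restriction is essential here.
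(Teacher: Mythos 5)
Your proposal is correct and takes essentially the same route as the paper: membership via an \nauxpdapoly\ that nondeterministically guesses the path while simulating the language's PDA on the stack (polynomial time following from acyclicity), and hardness via a $\LogCFL$-complete context-free language together with Observation~\ref{obs:L-to-Lreach} (the paper cites Sudborough's complete CFL rather than Greibach's hardest CFL, and uses the single AuxPDA argument for both the tree and DAG cases). Your minor variations---handling \ctreach{CFL} by reducing to $\lang{L}$ itself via Observation~\ref{obs:Ltreach-to-L}, and importing DAG-hardness from Theorem~\ref{thm:dcfl-reach}---are equally valid.
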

\begin{proof}
Sudborough \cite{Sudb75} defines a context-free language that is
complete for the class $\LogCFL$. This shows the hardness. To show
membership in $\LogCFL$\ consider an \nauxpdapoly\ that starts with $s$
as the current vertex and at each step guesses the next vertex while
simultaneously using the stack to simulate the parsing of the
CFL. This machine accepts iff the current vertex is $t$ at some point
and the PDA is in an accepting state at the same time. It
is easy to see that this \nauxpdapoly\ decides these languages.
\end{proof}

We now give a simplified presentation of a known result that says that 
\creach{CFL} is \P-complete. Also observe that 

\begin{theorem}[\cite{UllmanG88}]
  Let \lang{D_2}  ($\epsilon$-free Dyck$_2$) be the \class{CFL} given by the grammar 
  \begin{equation*}
    S \rightarrow (S)\ |\ [S]\ |\ SS\ |\ (\ )\ |\ [\ ].
  \end{equation*}
  \lreach{D_2} is \P-complete.
  \label{thm:cfl-reach-pc}
\end{theorem}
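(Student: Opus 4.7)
The upper bound is immediate from Proposition~\ref{prop:CFLinP}, since $D_2$ is a context-free language. For the lower bound, the plan is to reduce from the Monotone Circuit Value Problem (MCVP), which is known to be $\P$-complete under logspace reductions. Given a monotone circuit $C$ with AND and OR gates and output gate $g_{\text{out}}$, I would construct in logspace a labelled graph $G$ over the edge alphabet $\{(,),[,]\}$ with distinguished vertices $s$ and $t$ such that $C$ evaluates to $1$ iff $G$ has a path from $s$ to $t$ whose yield lies in $D_2$.

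The construction is gadget-based and I would describe it gate by gate. For each gate $g$ introduce two vertices $a_g$ and $b_g$, and maintain the invariant that $g$ evaluates to $1$ iff there exists a path from $a_g$ to $b_g$ in $G$ with yield in $D_2$. A true input gets a two-edge path $a_g \to v_g \to b_g$ labelled $($ then $)$; a false input leaves $a_g$ and $b_g$ disconnected. An OR-gate $g = g_1 \vee g_2$ is realised by adding edges $a_g \to a_{g_1}$ and $b_{g_1} \to b_g$ (labelled $($ and $)$) together with edges $a_g \to a_{g_2}$ and $b_{g_2} \to b_g$ (labelled $[$ and $]$). An AND-gate $g = g_1 \wedge g_2$ introduces a midpoint $c_g$ together with edges $a_g \to a_{g_1}$ (label $($), $b_{g_1} \to c_g$ (label $)$), $c_g \to a_{g_2}$ (label $[$), $b_{g_2} \to b_g$ (label $]$), so that any balanced $a_g \to b_g$ path ought to supply balanced subpaths for both $g_1$ and $g_2$. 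Finally set $s := a_{g_{\text{out}}}$ and $t := b_{g_{\text{out}}}$.

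Correctness is by induction on the depth of the gate $g$. The forward direction is straightforward: one combines the inductively supplied balanced subpaths inside the appropriate gadget wrapper, using the productions $S \to (S)$, $S \to [S]$ and $S \to SS$ to verify that the yield is in $D_2$. The main obstacle I anticipate is the converse. Because the vertices associated with low-depth gates are shared across all gadgets that reference them (whenever a gate has fan-out greater than one), a balanced $a_g \to b_g$ path is not \emph{a priori} forced to stay inside the $g$-gadget; it may exit into a sibling gadget and come back. To handle this I would exploit the unique matching property of Dyck strings: the position in the yield that closes the very first opening bracket pins down uniquely the vertex reached at that point, and a short case analysis on the outgoing edges of each gadget shows that this vertex must be $b_{g_1}$ (and that in the AND case the intermediate vertex $c_g$ is then forced). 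Iterating this localises the path and lets the induction proceed. Executing this decomposition argument cleanly in the presence of shared sub-gadget vertices is the key technical step and is where most of the effort lies.
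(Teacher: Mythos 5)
The membership half of your proposal is fine, and reducing from MCVP is also the paper's route. The gap is in the hardness construction, and it sits exactly at the step you yourself flag as the crux. In your gadgets the bracket type of an edge is determined by the \emph{position} (left or right input) that a sub-gate occupies in the consuming gate. Consequently, if a gate $h$ has fan-out two and happens to be, say, the left input of both of its consumers $Q$ and $P$, then $b_h$ has two outgoing edges both labelled ``)'' (and $a_h$ has two incoming edges both labelled ``(''), leading to different vertices. This falsifies the key claim in your converse argument: the symbol that closes the first opening bracket does \emph{not} pin down the vertex reached, because from $b_h$ the same closing label can take the path either into $Q$'s continuation or into $P$'s --- precisely the cross-gadget ``tunnel'' you are worried about, available with perfectly matching brackets. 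Note also that with shared sub-gadgets your graph need not even be acyclic and paths may repeat vertices, so the local case analysis you envisage cannot be rescued by appealing to a DAG structure; and you never bound the fan-out, while with only two bracket types no labelling can give distinct closings to three or more exit edges anyway.

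The paper's proof (following Ullman and van Gelder) repairs exactly this point, but differently from what you sketch: it reduces from the MCVP variant in which every gate has fan-in \emph{and fan-out} at most two (still \P-complete), and it keys the bracket type to which \emph{output wire of the producing gate} is used --- the first wire gets ``('' and ``)'', the second gets ``['' and ``]''. Then the two closing edges leaving a gadget's output vertex always carry \emph{different} bracket types, so a path that enters a gadget through its input vertex and tries to leave anywhere other than through the matching output edge immediately creates a type mismatch. The induction is then run with a correspondingly stronger hypothesis: a valid path from the input vertex to the output vertex of a gadget exists iff the gate evaluates to $1$, \emph{and} any path that enters a gadget and leaves it other than through its output vertex is invalid. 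That stronger invariant is what makes the decomposition go through in the presence of shared sub-gadgets. To fix your write-up you should adopt (or re-derive) both ingredients --- the fan-out bound and the wire-indexed labelling --- since the consumer-position-indexed labelling you chose does not support the uniqueness step your argument needs, and it is unclear whether any purely global argument can salvage it.
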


\begin{proof}
  This theorem has been proved in \cite{UllmanG88} using a different
  terminology. Here we give a simplified presentation of the proof
  using our terminology for the hardness of this language.  We show
  the \P\ hardness for \lang{D_2} by reducing a \P-complete problem MCVP\ (Monotone Circuit Value
  Problem where fan-out and fan-in of each gate is at most 2) to
  \lreach{D_2}. We may assume without loss of generality that each gate in the input circuit
  has fan-out at most 2. The reduction works by replacing each gate by
  a gadget as shown in Figure~\ref{fig:d2reach}. Each gadget in the
  construction has an input vertex and an output vertex.  The gadgets
  for input gates are straightforward. For an AND gate we add 3 new
  vertices and connect them to the gadgets for two gates feeding input
  to the AND gate. Suppose that the left input to the AND gate comes
  from the $2^{nd}$ ($1^{st}$) output wire of the left input
  gate. Then the first and second edges are labelled by ``$[$''
    (``$($'' resp.) and ``$]$'' (``$)$'' resp.) respectively.

  We use proof by induction on the level of the output gate of the
  circuit to prove the correctness of this reduction.  The inductive
  hypothesis is that there is a valid path from the input vertex to
  the output vertex of a gadget iff the output of the gate is 1 and
  any path that enters a gadget through its input gate and leaves it
  from some vertex other than its output vertex will be invalid.  This
  holds trivially for gadgets for the input gates.  Now any valid path
  from the input vertex to the output vertex of the AND gadget must
  consist of valid subpaths within the gadgets for the gates feeding
  input to this AND gate.  The only exception is when some path leaves
  this gadget for the AND gate from some vertex other than its output
  vertex. Note that by the induction hypothesis such a path can only
  leave from vertex $w$ or $z$ of the gadget. But the vertex $w$ (also $z$)
  has out-degree at most 2 and the other edge will be labelled by a
  closing bracket that does not match the type of bracket on the edge
  $(u, v)$.  This mismatch invalidates the path.  A similar argument
  holds for OR gates. This completes the induction.
\end{proof}

\begin{figure}[ht!]
  \centering
  \includegraphics[scale=0.5]{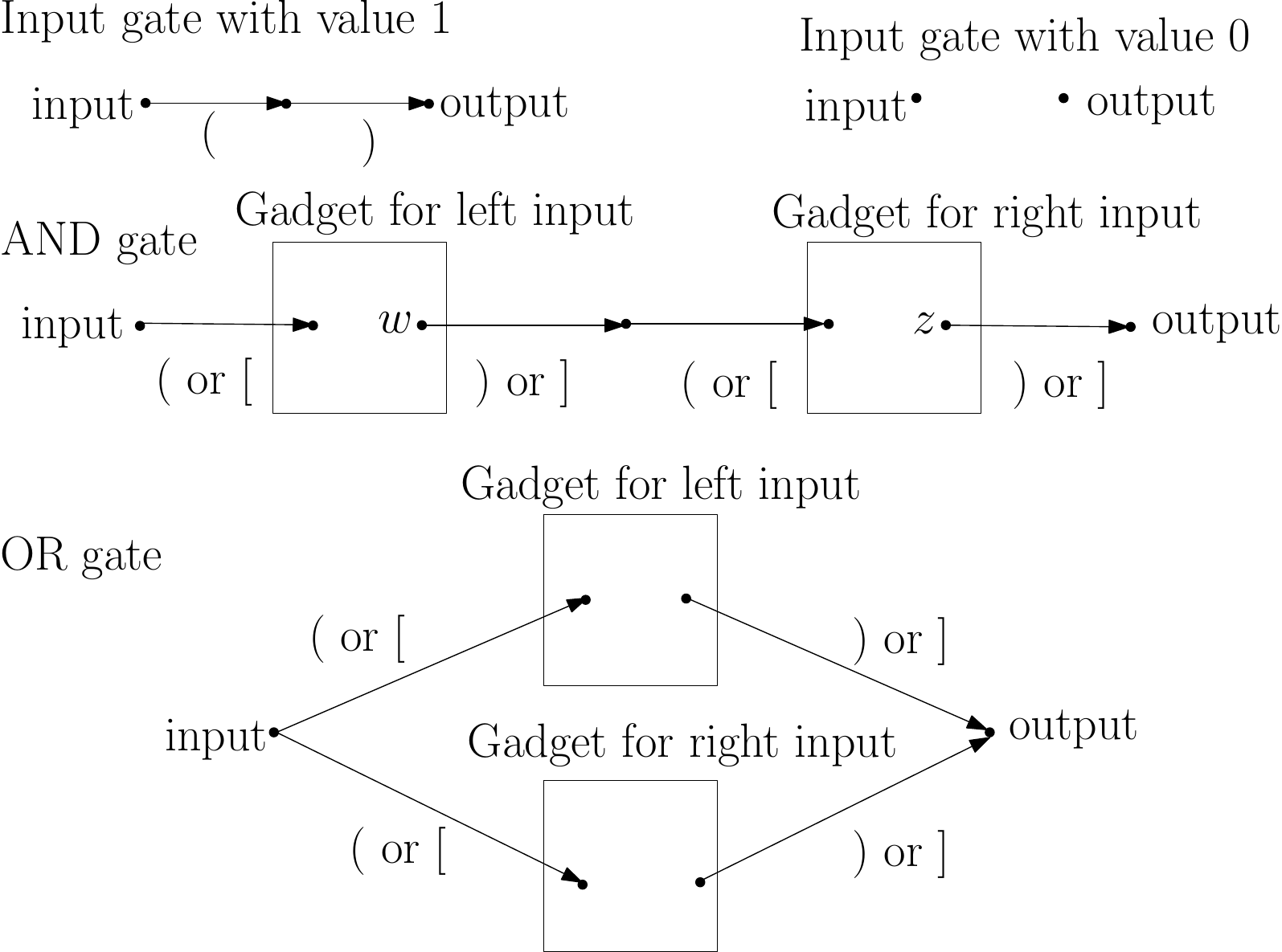}
  \caption{Reducing MCVP\ to \lreach{D_2}}
  \label{fig:d2reach}
\end{figure}

Now we prove a theorem similar in spirit to
Theorem~\ref{thm:regureach} for \class{CFL}s.  The proof uses
the same idea to make the undirected version as hard as the directed
one.

\begin{theorem}
  Let \lang{DD_2} be the CFL given by the grammar 
  \begin{equation*}
    S \rightarrow (aSb)\ |\ [cSd]\ |\ SS\ |\ (a\ b)\ |\ [c\ d].
  \end{equation*}
  \lureach{DD_2} is \P-complete.
  \label{thm:cflureach}
\end{theorem}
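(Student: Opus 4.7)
Membership of $\lureach{DD_2}$ in $\P$ is immediate: orient each undirected edge in both directions to reduce to $\lreach{DD_2}$, which is in $\P$ by Proposition~\ref{prop:CFLinP}. For $\P$-hardness I would mirror the gadget idea of Theorem~\ref{thm:regureach} and reduce from $\lreach{D_2}$, which is $\P$-complete by Theorem~\ref{thm:cfl-reach-pc}. The point is that the direction markers $a,b,c,d$ in the grammar of $DD_2$ exist precisely so we can encode direction using edge labels, forcing an undirected walk to traverse each gadget only one way.

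Given a directed, bracket-labelled graph $G$, construct $G'$ by replacing each directed edge $(u,v)$ of $G$ with a fresh middle vertex $m_{uv}$ and two undirected edges $\{u,m_{uv}\}$ and $\{m_{uv},v\}$, labelled by the two halves of a matching $DD_2$ pair: for label $($ use $($ then $a$; for $)$ use $b$ then $)$; for $[$ use $[$ then $c$; for $]$ use $d$ then $]$. Forward-traversing a gadget reads the expected pair ($(a$, $b)$, $[c$, or $d]$), so any directed $s$-$t$ path in $G$ with yield $w \in D_2$ lifts to an undirected $s$-$t$ path in $G'$ whose yield is the $DD_2$ string obtained from $w$ by inserting the appropriate direction marker next to each bracket.

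The crux is the converse: every undirected $s$-$t$ walk in $G'$ whose yield lies in $DD_2$ must in fact forward-traverse each gadget it enters. I would prove this by induction on walk length, using three rigid features of $DD_2$: every string begins with $($ or $[$; each of $($, $[$, $b$, $d$ is \emph{always} followed immediately by $a$, $c$, $)$, $]$ respectively; and after $a$, $c$, $)$, $]$ the next symbol is constrained so that none of $aa$, $cc$, $))$, $]]$, $)a$, $)c$, $)]$, $]a$, $]c$, $])$, $ac$, $a]$, $ca$ can occur as a substring. At a middle vertex the second property forces continuation rather than reversal; at an original vertex $v$ reached only by forward-traversals the yield is either empty (at $s$, forcing an opening bracket) or ends in one of $\{a, c, ), ]\}$, and then the third property eliminates every incident edge that would initiate a backward-traversal of an incoming gadget, leaving only the first edges of outgoing gadgets available. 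The main obstacle is this exhaustive case analysis of forbidden two-letter patterns for each of the four possible terminal symbols; once it is in hand, the lift-and-project operations are mutually inverse, giving a logspace reduction from $\lreach{D_2}$ to $\lureach{DD_2}$ and completing the proof.
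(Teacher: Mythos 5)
Your proposal is correct and follows essentially the same route as the paper: the hardness reduction (subdividing each directed edge into an undirected length-two path with yield $(a$, $b)$, $[c$, or $d]$ read in the forward direction) is exactly the paper's construction, and your membership argument via bidirecting edges is an inessential variant of the paper's appeal to \cureach{CFL} $\subseteq \P$. The paper dismisses correctness as ``easy to see,'' so your forbidden-substring case analysis (which should also include $a)$, $c)$, $c]$ to be exhaustive) is simply a fuller write-up of the same argument.
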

\begin{proof}
  \cureach{CFL} is in \P\ by \cite{UllmanG88}. We prove hardness by
  reducing from \lreach{D_2}. The reduction works by replacing each
  edge of the \lreach{D_2} instance by an undirected path of length
  two.  If for two vertices $a,b$, the directed edge from $a$ to $b$ is
  labelled ``$($'' (respectively ``$)$'',``$[$'' and ``$]$'') then
  replace it by an undirected path of length two with yield
  ``$(a$''(respectively ``$b)$'',``$[c$'' and ``$d]$'') when read from vertex $a$ to vertex $b$.
The correctness of the reduction is easy to see.
\end{proof}

\noindent We state the following proposition, which follows from Theorem~\ref{thm:pspace-np-inv}.
\begin{proposition}
\ctreach{CSL} and \cdreach{CSL} are \PSPACE-complete.
\label{prop:csl-tdreach}
\end{proposition}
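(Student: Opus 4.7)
The plan is to deduce both containment and hardness from pieces already in the paper. For the upper bound on \ctreach{CSL}, I would invoke Observation~\ref{obs:Ltreach-to-L}: since every language in \class{CSL} lies in \class{NSPACE}(n) $\subseteq$ \PSPACE, and \ltreach{L} is logspace reducible to \lang{L}, we immediately get $\ltreach{L}\in\PSPACE$ for every $\lang{L}\in\class{CSL}$. For the upper bound on \cdreach{CSL}, I would use the forward-referenced Theorem~\ref{thm:pspace-np-inv}, which states $\dagr{\PSPACE}=\PSPACE$; applied to any $\lang{L}\in\class{CSL}\subseteq\PSPACE$, it gives $\ldreach{L}\in\PSPACE$.

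For \PSPACE-hardness, I would exhibit a single witness language $\lang{L}\in\class{CSL}$ that is \PSPACE-complete and then transfer the hardness to the reachability versions. The existence of such an $\lang{L}$ follows from Kuroda's characterization $\class{CSL}=\class{NSPACE}(n)$ together with a standard padding argument: take any \PSPACE-complete language $\lang{L}_0$ decidable in space $n^k$ and let $\lang{L}=\{x\#1^{|x|^k}:x\in\lang{L}_0\}$, which lies in $\class{DSPACE}(n)\subseteq\class{CSL}$ and inherits \PSPACE-hardness under logspace reductions. Given this $\lang{L}$, Observation~\ref{obs:L-to-Lreach} says $\lang{L}$ is projection reducible (hence logspace reducible) to \ltreach{L}, yielding \PSPACE-hardness of \ltreach{L}. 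Since a tree is in particular a DAG, the same construction witnesses \PSPACE-hardness of \ldreach{L}, completing both completeness claims.

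The only mildly delicate step is producing the \PSPACE-complete CSL; everything else is a direct application of the two observations and Theorem~\ref{thm:pspace-np-inv}. I do not anticipate any real obstacle, since padding preserves completeness under logspace reductions and the subsequent transfer to the reachability versions is mechanical given the infrastructure already set up.
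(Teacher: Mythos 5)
Your argument is correct and is essentially the paper's intended proof: the paper gives no details beyond ``follows from Theorem~\ref{thm:pspace-np-inv}'', and the natural reading is exactly your combination of membership via Theorem~\ref{thm:pspace-np-inv} (resp.\ Observation~\ref{obs:Ltreach-to-L} for trees) with hardness via a \PSPACE-complete context-sensitive language transferred through Observation~\ref{obs:L-to-Lreach}. Your explicit padding construction of the \PSPACE-complete language in $\class{DSPACE}(n) \subseteq \class{CSL}$ merely fills in a standard fact the paper leaves implicit.
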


\section{Complexity Class restricted Reachability}

Now we consider the complexity of \lreach{L} and its variants when
\lang{L} is chosen from complexity classes. Barrett
et. al. \cite{Marathe00} has shown that even for languages in \L, the
languages \lreach{L} and \lureach{L} are undecidable. But note that
for any decidable \lang{L}, the language \ldreach{L} is decidable. So
we restrict our study only to \ldreach{L} in this section.

We have seen that moving up in the Chomsky hierarchy increases the
complexity of reachability. It is natural to ask whether such an
observation also holds with respect to the complexity classes, i.e., increasing the
complexity of \lang{L} increases the complexity of \ldreach{L}. More
concretely, does $\lang{A} \lmreduces \lang{B}$ imply 
$\ldreach{A} \lmreduces \ldreach{B}$.  The following theorem (which we restate from the introduction) shows
that this is very unlikely.

\vspace{3mm}

\begin{theorem-a}{\ref{thm:main}}
 There is an $\lang{A} \in \TC^0$ for which $\ldreach{A}$ is
  \NP-complete.
\end{theorem-a}
\begin{proof}
 The language \lang{A} can be thought of as an encoding of vertex cover.
 Each string $w$ in \lang{A} consists of 3 parts, say $w_1, w_2$ and
 $w_3$.  $w_1$ is a string of the form $1^k0^{n-k}$ and encodes $k$,
 the size of vertex cover, in unary. $w_2$ consists $n\choose 2$ bits
 which is the adjacency matrix representation of the input graph.
 $w_3$ consists $n$ bits which encodes the vertex cover by the
 characteristic vector. The strings $w_1, w_2$ and $w_3$ are separated
 by a $\#$ and each of the $n$ bits in $w_3$ is separated by a $\#$.

Let $n_1(x)$ be the number of 1's in the string $x$. A string $w$ is
in the language \lang{A} iff the following conditions hold.
\begin{enumerate}
\item The size of the vertex cover must be at most the size given in the first
  part of $w$.\\ ie., $n_1(w_3) \leq n_1(w_1)$, and

\item If the edge $\{i,j\}$ is present in the graph, then either the
  $i^{th}$ or the $j^{th}$ vertex must be present in the vertex
  cover. \\ ie., $(w_2(i,j)=1)\implies ((w_3(i)=1)\lor(w_3(j)=1)$.
\end{enumerate}

Any string  $w\in\lang{A}$ can be expressed as 

\begin{eqnarray*}
(\forall_{i,j}(w_2(i,j)=1) & \implies & (w_3(i)=1\lor w_3(j)=1))\land\\
  & &\exists k\le n, ((n_1(w_1)=k)\land (n_1(w_3)\leq k))
\end{eqnarray*}


An $\AC^0$ circuit is enough to check the conditions
$(\forall_{i,j}(w_2(i,j)=1) \implies (w_3(i)=1\lor w_3(j)=1))$ and
$\exists k\le n, (n_1(w_1)=k)$ but a $\TC^0$ circuit is necessary to
check whether $n_1(w_3)\leq k$.

A sketch of the structure of the circuit is given in Fig~\ref{tc0-circuit}.

\begin{figure}[ht!]
\centering \includegraphics[scale=0.84]{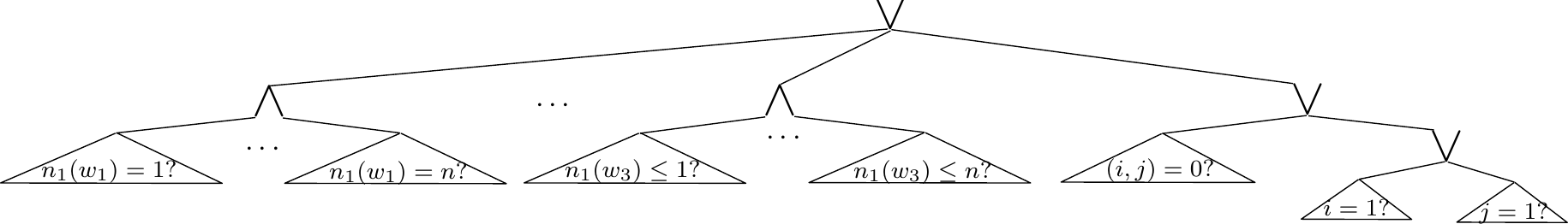}
\caption{Circuit for $\lang{A}$}
\label{tc0-circuit}
\end{figure}

To show \NP-hardness, we reduce \lang{VERTEX\lower-.12em\hbox{--}COVER}
to \ldreach{A}.
 
The language $\ldreach{A}$ is in $\NP$ as the non-deterministic Turing
machine guesses the path and verifies whether the yield of the path is
in \lang{A}.

 The reduction is given in Fig~\ref{AREACHisNPC1}. The DAG contains
 three parts. The first part, path from $s$ to $t_1$ encodes the size
 of the vertex cover and the second part, from $t_1$ to $t_2$ encodes
 the graph while the third part, from $t_2$ to $t$ represents the
 actual vertex cover.

\begin{figure}[ht!]
\centering \includegraphics[scale=0.70]{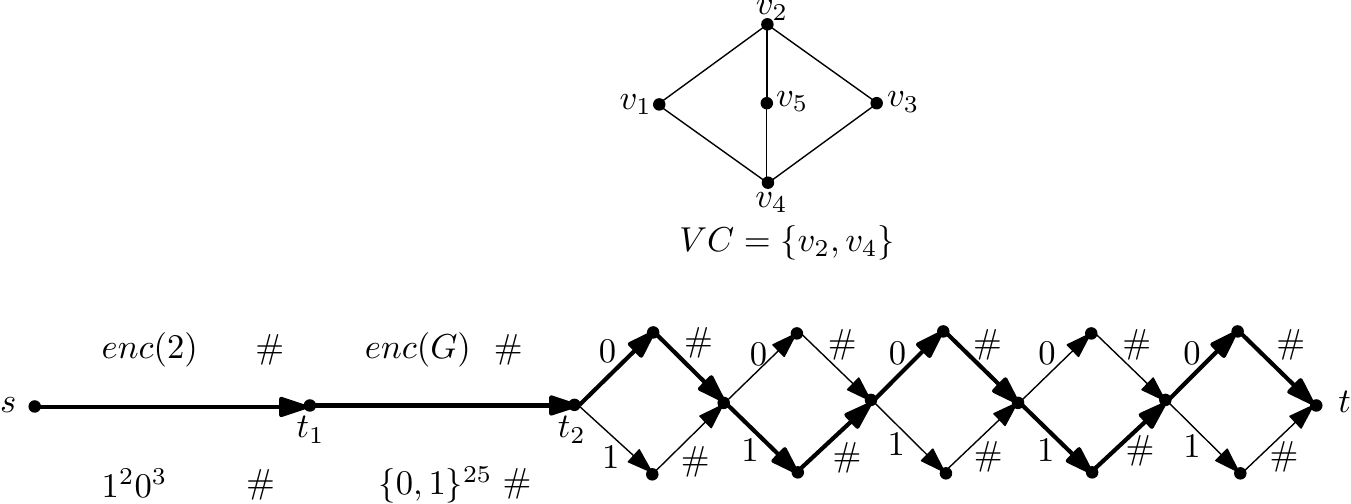}
\caption{Reducing VERTEX COVER to $\ldreach{A}$}
\label{AREACHisNPC1}
\end{figure}

For every $w\in \lang{A}$ we construct a valid path in DAG as
follows. Take the edges labelled by $1$, corresponding to the $1's$ in
the third part of $w$ (it is same as the vertices in the vertex cover).  
For the remaining vertices, the edges labelled
$0$ will be taken in the path.

Every valid path in the DAG corresponds to a vertex cover in $G$.  Let
$w$ be the yield of the path and let $w_3$ be its third part. Then
include $i$ in the vertex cover iff the $(2i - 1)^{\hbox{th}}$ symbol
of $w_3$ is 1.
\end{proof}

We are now going to see how the above result can be used for
translating separations of complexity classes upwards (Theorem~\ref{thm:pdr}).
For any complexity class $\class{C}$, we consider the class of
languages defined as $\dagr{C} = \{ \lang{L} : \ldreach{L} \in
\class{C}\}$. We have the following theorems for natural choices of
$\class{C}$. Note that for any class $\class{C}$, we have
$\dagr{C} \subseteq \class{C}$.

\begin{theorem}
  $\dagr{\PSPACE} = \PSPACE$ and $\dagr{\NP} = \NP$.
  \label{thm:pspace-np-inv}
\end{theorem}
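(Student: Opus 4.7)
The plan is to prove each equality by showing both inclusions separately, where the nontrivial direction is to show $\class{C} \subseteq \dagr{\class{C}}$ for $\class{C} \in \{\NP, \PSPACE\}$.

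For the inclusion $\dagr{\class{C}} \subseteq \class{C}$, I would appeal to Observation~\ref{obs:L-to-Lreach}: any language $\lang{L}$ is projection reducible to $\ltreach{L}$, and since an oriented tree is a DAG, $\ltreach{L}$ reduces to $\ldreach{L}$ (the identity reduction). Thus if $\ldreach{L} \in \class{C}$, then $\lang{L} \in \class{C}$ by closure of $\NP$ and $\PSPACE$ under logspace (indeed, projection) reductions. This gives $\dagr{\NP} \subseteq \NP$ and $\dagr{\PSPACE} \subseteq \PSPACE$.

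The other direction is where the specific structure of DAGs matters. The key observation I would use is that in a DAG every path is simple, so on an input graph with $n$ vertices every $s$-to-$t$ path has at most $n-1$ edges, hence its yield has length at most $n-1$. For $\lang{L} \in \NP$, an $\NP$ algorithm for $\ldreach{L}$ guesses such a path vertex by vertex, writes down its yield $y$ (of length at most $n-1$), and then invokes the $\NP$ verifier for $\lang{L}$ on $y$; the whole procedure runs in nondeterministic polynomial time, so $\ldreach{L} \in \NP$. For $\lang{L} \in \PSPACE$, I would give an $\NPSPACE$ procedure that performs the same guess-and-verify step (the verifier uses polynomial space in $|y| \le n$, which is polynomial in the input), and then apply Savitch's theorem (equivalently, use $\NPSPACE = \PSPACE$) to conclude $\ldreach{L} \in \PSPACE$.

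I do not expect a substantial obstacle here: the whole argument hinges on the DAG acyclicity forcing polynomially bounded yields, which is immediate from the definition. The only minor point to state cleanly is the closure of $\NP$ and $\PSPACE$ under projections, which is standard. Notice that exactly the same argument would give $\dagr{\mathrm{EXP}} = \mathrm{EXP}$, $\dagr{\mathrm{NEXP}} = \mathrm{NEXP}$, etc., so the content of the theorem is really that $\NP$ and $\PSPACE$ are large enough that the polynomial-length yield and the subsequent membership check both fit inside them; this is precisely where the analogous statement fails for smaller classes like $\P$ (motivating Theorem~\ref{thm:pdr}) and $\NL$ (motivating Theorem~\ref{thm:np-nl}).
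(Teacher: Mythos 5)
Your proposal is correct and follows essentially the same route as the paper: the trivial inclusion via Observation~\ref{obs:L-to-Lreach}, and for the nontrivial direction the fact that paths in a DAG are simple and hence have polynomial-length yields, handled by guess-and-verify for $\NP$ and a polynomial-space simulation for $\PSPACE$ (the paper deterministically enumerates all $s$--$t$ paths rather than invoking $\NPSPACE = \PSPACE$, but this is an immaterial variation). One small caution about your closing aside: it is not known that the analogous \emph{statement} fails for $\P$ or $\NL$ --- only that this argument breaks down, and indeed $\dagr{\P} = \P$ is equivalent to $\P = \NP$ by Theorem~\ref{thm:pdr}.
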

\begin{proof}
  Let $\lang{L} \in \PSPACE$, then given an instance of $\ldreach{L}$
  we enumerate all paths from $s$ to $t$ and run
  the \PSPACE\ algorithm for $\lang{L}$ on the yield.  This is
  a \PSPACE\ algorithm for $\ldreach{L}$.
  Similarly if $\lang{L} \in \NP$, then a path from $s$ to $t$ along
  with the certificate for the yield on that path is a poly-time
  verifiable certificate for the $\ldreach{L}$ problem.
\end{proof}

\begin{theorem}
  \label{thm:pdr}
  $\P \neq \dagr{\P} \iff \P \neq \NP$.
  \label{thm:p-np}
\end{theorem}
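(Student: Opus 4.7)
The plan is to prove both directions using the already-established Theorem~\ref{thm:main} and the generic inclusion $\dagr{\P} \subseteq \P$. Before splitting into the two implications, I would first record the easy containment: for every language $\lang{L}$, Observation~\ref{obs:L-to-Lreach} says $\lang{L}$ projection-reduces (hence $\AC^0$-reduces, hence polynomial-time reduces) to $\ltreach{L}$, which is itself a restriction of $\ldreach{L}$. Consequently $\ldreach{L} \in \P$ forces $\lang{L} \in \P$, i.e.\ $\dagr{\P} \subseteq \P$. This means the equivalence is really about whether the reverse inclusion $\P \subseteq \dagr{\P}$ holds.

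For the forward direction I would prove the contrapositive: assume $\P = \NP$ and show $\P = \dagr{\P}$. Fix an arbitrary $\lang{L} \in \P$. A path from $s$ to $t$ in a DAG is an $\NP$ certificate whose yield we can feed to the polynomial-time decider for $\lang{L}$, so $\ldreach{L} \in \NP$ (this is exactly the argument used in Theorem~\ref{thm:pspace-np-inv} for $\dagr{\NP} = \NP$). Under the hypothesis $\P = \NP$ this puts $\ldreach{L}$ in $\P$, so $\lang{L} \in \dagr{\P}$. Combined with $\dagr{\P} \subseteq \P$, we get $\dagr{\P} = \P$.

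For the backward direction I would appeal directly to Theorem~\ref{thm:main}: there is a language $\lang{A} \in \TC^0 \subseteq \P$ for which $\ldreach{A}$ is $\NP$-complete. Assume $\P \neq \NP$. Then $\ldreach{A} \notin \P$ (since an $\NP$-complete language in $\P$ would collapse $\NP$ into $\P$), so by definition $\lang{A} \notin \dagr{\P}$. But $\lang{A} \in \P$, so $\lang{A}$ witnesses $\P \neq \dagr{\P}$.

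There is no real obstacle once Theorem~\ref{thm:main} is in hand; the content of this theorem is simply to package that result into a statement about the operator $\dagr{\cdot}$. The only subtlety worth double-checking in the write-up is that the reduction witnessing $\dagr{\P} \subseteq \P$ is uniform and polynomial-time (which is immediate from Observation~\ref{obs:L-to-Lreach} together with Observation~\ref{obs:Ltreach-to-L}), so the containment used in the forward direction is genuine.
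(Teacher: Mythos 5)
Your proposal is correct and follows essentially the same route as the paper: both directions rest on the same two ingredients, namely the $\NP$ upper bound on $\ldreach{L}$ for $\lang{L} \in \P$ (Theorem~\ref{thm:pspace-np-inv}) and the language $\lang{A} \in \TC^0$ with $\ldreach{A}$ $\NP$-complete (Theorem~\ref{thm:main}), with your two implications merely stated contrapositively relative to the paper's. The extra remark that $\dagr{\P} \subseteq \P$ via Observation~\ref{obs:L-to-Lreach} matches the paper's own justification of $\dagr{\class{C}} \subseteq \class{C}$.
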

\begin{proof}
  Suppose $\P \neq \dagr{\P}$ and let $\lang{L} \in \P \setminus
  \dagr{\P}$. Now $\ldreach{L}$ is in \NP\ by Theorem~\ref{thm:pspace-np-inv}.  By
  the choice of $\lang{L}$ we also have $\ldreach{L}$ is not in $\P$.

  For the other direction: suppose $\dagr{\P} = \P$.  We know that
  there is a language $\lang{L} \in \P$ for which \ldreach{L} is
  \NP-complete. Hence, $\P = \NP$.
\end{proof}

Theorem~\ref{thm:p-np} shows that separating $\dagr{\P}$ from $\P$ would
separate $\P$ from $\NP$.  This gives us an upward translation of
lower bounds on complexity classes provided we can prove that 
\dagr{\P} is contained in some subclass of \P. The interesting question is
whether we can identify some ``natural'' complexity class containing
$\dagr{\P}$.

By using similar arguments, we also have

\begin{theorem}
\label{thm:np-nl}
  $\dagr{\NL} \neq \NL \iff \NP \neq \NL$.
\end{theorem}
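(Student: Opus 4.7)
The plan is to mirror the proof of Theorem~\ref{thm:pdr}, replacing $\P$ by $\NL$ throughout and using the two facts already established in the paper: (i) $\dagr{\NP} = \NP$ (Theorem~\ref{thm:pspace-np-inv}), which implies that $\ldreach{L} \in \NP$ for every $\lang{L} \in \NL \subseteq \NP$; and (ii) there exists a language $\lang{A} \in \TC^0 \subseteq \NL$ such that $\ldreach{A}$ is $\NP$-complete under logspace reductions (Theorem~\ref{thm:main}).

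For the forward direction, I would assume $\dagr{\NL} \neq \NL$ and pick $\lang{L} \in \NL \setminus \dagr{\NL}$. Since $\NL \subseteq \NP$, fact (i) gives $\ldreach{L} \in \NP$. On the other hand, $\lang{L} \notin \dagr{\NL}$ means $\ldreach{L} \notin \NL$. Thus $\ldreach{L}$ is a witness for $\NP \neq \NL$.

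For the reverse direction, I would assume $\dagr{\NL} = \NL$ and invoke Theorem~\ref{thm:main} to obtain $\lang{A} \in \TC^0 \subseteq \NL$ with $\ldreach{A}$ being $\NP$-complete. The assumption then yields $\lang{A} \in \dagr{\NL}$, i.e., $\ldreach{A} \in \NL$. Since $\NL$ is closed under logspace reductions and every $\NP$ language logspace-reduces to $\ldreach{A}$, we conclude $\NP \subseteq \NL$, hence $\NP = \NL$.

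There is no substantive obstacle; the argument is structurally identical to Theorem~\ref{thm:pdr}. The one delicate point worth flagging in the write-up is that the downward collapse step requires the $\NP$-hardness in Theorem~\ref{thm:main} to hold under a reduction no stronger than logspace, since $\NL$ is closed under logspace reductions but not under polynomial-time ones. This is precisely the convention fixed in the preliminaries (\emph{all reductions are logspace unless stated otherwise}), so the application of Theorem~\ref{thm:main} goes through verbatim.
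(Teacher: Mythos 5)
Your proof is correct and is essentially the paper's own argument: the paper proves Theorem~\ref{thm:np-nl} only by remarking that it follows ``by using similar arguments'' to Theorem~\ref{thm:pdr}, and your instantiation with $\NL$ in place of $\P$ (using $\dagr{\NP}=\NP$ for membership and Theorem~\ref{thm:main} with logspace $\NP$-hardness for the collapse) is exactly that intended argument, with the logspace-reduction caveat correctly flagged.
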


However \dagr{\NL} contains \NL-complete languages (See
Theorem~\ref{thm:lin}). So proving that \dagr{\NL} is separate from
\NL\ could be very hard.

The following theorem can be viewed as an evidence that \dagr{P} could be
separate from \P. A language \lang{L} is syntactic read-once logspace
(this notion was considered by Hartmanis et. al. in \cite{hartmanis})
reducible to another language \lang{L'} iff there is a logspace
reduction from \lang{L} to \lang{L'} and in the configuration graph
for this reduction all paths from the start configuration to the
accepting configuration reads each input variable at most once. It
shows that if we restrict our attention to syntactic read-once
logspace reductions, then \ldreach{L} for a \P-complete problem \lang{L}
is \NP-complete. Note that many natural \P-complete problems such as
\CVP\ (Circuit Value Problem) remains \P-complete even under syntactic
read-once logspace reductions.\\[1mm]

\begin{theorem-a}{\ref{thm:rol}}
  If $\lang{L}$ is $\P$-complete under syntactic read-once logspace
  reductions, then $\ldreach{L}$ is $\NP$-complete.
\end{theorem-a}
\begin{proof}
  Let $\lang{V} \in \NP$ via a poly-time verifier $\mach{N}$.  Let
  $\lang{W}$ be the witness language for $\lang{V}$. i.e., $\lang{W} =
  \{ (x, w) : N(x, w) = 1 \text{ and } |w| = {|x|}^{k} \text{ for some
  } k\}$. Since \lang{L} is \P-complete $\lang{W}$ is read-once logspace 
  reducible to $\lang{L}$ via $\mach{M}$.  We reduce $\lang{V}$
  to $\ldreach{L}$.  Let $x$ be our input.  Take the configuration
  graph $G$ of $\mach{M}$ on length $|x| + {|x|}^{k}$ inputs (after
  fixing the value of $x$) and label each edge by the symbol output by
  the machine $\mach{M}$ in that step.  This graph $H$ is considered
  as an input to the language $\ldreach{L}$. First we prove that $H
  \in \ldreach{L}$ implies that $x \in \lang{V}$. Consider a path from
  $s$ to $t$ in $H$ labelled by a string in \lang{L}. This path
  corresponds to a witness string for $x$. Therefore there exists a
  string $w$ for which $(x, w)$ in $\lang{W}$ which implies $x \in
  \lang{V}$. For the other direction let $x \in \lang{V}$. Therefore
  there exists a string $w$ such that $(x, w) \in \lang{W}$. Now take
  the path in $G$ that corresponds to this $w$. The yield of this path
  is a member of the language \lang{L} since \mach{M} outputs this
  yield when given $(x, w)$ as input.
\end{proof}

\section{Discussion and Open Problems}

The main result of our work is the observation that if we can prove
that the class \dagr{\P} is contained in some complexity class that is
a subclass of \P, then we can translate separation of complexity
classes upwards. We propose the following open problem. \\[3mm]
\noindent{\bf Open Problem 1: } Prove that $\dagr{\P} \subseteq \NC$.\\

It would be interesting to study the behavior of \dagr{.} operator on
complexity classes below \NL. $\AC^0$ is the class of all languages
computable by poly-size, constant depth uniform Boolean circuits. Can
we say anything about the set of languages \dagr{\AC^0}? The only
languages \lang{L} for which we know that \ldreach{L} is in $\AC^0$
are finite languages. Recall that \class{DAGREACH} is \NL-complete and
we know that $\NL \neq \AC^0$. Therefore, for any language \lang{L}
such that \ldreach{L} is in $\AC^0$, the \ldreach{L} problem is strictly
easier than \lang{DAGREACH}. This leads us to our second open
problem.\\[2mm]
\noindent {\bf Open Problem 2:} Prove that if $\ldreach{L} \in \AC^0$ then $\lang{L} \text{ is finite}$.

\bibliographystyle{fundam}
	\bibliography{ref}

\begin{thebibliography}{10}

\bibitem{arora-barak}
Arora, S., Barak, B.: \emph{Computational Complexity: A Modern Approach},
\newblock Cambridge University Press, 2009,
\newblock ISBN 9780521424264.

\bibitem{Marathe00}
Barrett, C.~L., Jacob, R., Marathe, M.~V.: Formal-Language-Constrained Path
  Problems,
\newblock \emph{SIAM Journal of Computing}, \textbf{30}(3), 2000, 809--837.

\bibitem{Barrington89}
Barrington, D. A.~M.: Bounded-Width Polynomial-Size Branching Programs
  Recognize Exactly Those Languages in {NC$^1$},
\newblock \emph{Journal of Computer and System Sciences}, \textbf{38}(1), 1989,
  150--164.

\bibitem{Barrington98searchingconstant}
Barrington, D. A.~M., Lu, C.-J., Miltersen, P.~B., Skyum, S.: Searching
  constant width mazes captures the {AC$^0$} hierarchy,
\newblock \emph{In Proceedings of the 15th Annual Symposium on Theoretical
  Aspects of Computer Science}, Springer-Verlag, 1998.

\bibitem{Hansen04}
Hansen, K.~A.: Constant Width Planar Computation Characterizes {ACC$^{0}$},
\newblock \emph{Proceedings of the 21st Annual Symposium on Theoretical Aspects
  of Computer Science}, 2004.

\bibitem{hartmanis}
Hartmanis, J., Immerman, N., Mahaney, S.~R.: One-Way Log-Tape Reductions,
\newblock \emph{Proceedings of 19th Annual Symposium on Foundations of Computer
  Science}, 1978.

\bibitem{hopcroft-ullman}
Hopcroft, J.~E., Motwani, R., Ullman, J.~D.: \emph{Introduction to automata
  theory, languages, and computation - international edition (2. ed)},
\newblock Addison-Wesley, 2003,
\newblock ISBN 978-0-321-21029-6.

\bibitem{Reps90}
Horwitz, S., Reps, T.~W., Binkley, D.: Interprocedural Slicing Using Dependence
  Graphs,
\newblock \emph{ACM Transactions on Programming Languages and Systems},
  \textbf{12}(1), 1990, 26--60.

\bibitem{KomarathSS14}
Komarath, B., Sarma, J., Sunil, K.~S.: On the Complexity of L-reachability,
\newblock \emph{Descriptional Complexity of Formal Systems - 16th International
  Workshop, {DCFS} 2014, Turku, Finland, August 5-8, 2014. Proceedings}, 2014.

\bibitem{Reingold05}
Reingold, O.: Undirected connectivity in log-space,
\newblock \emph{Journal of the ACM}, \textbf{55}(4), 2008.

\bibitem{Reps96}
Reps, T.~W.: On the Sequential Nature of Interprocedural Program-Analysis
  Problems,
\newblock \emph{Acta Informatica}, \textbf{33}(8), 1996, 739--757.

\bibitem{Reps98}
Reps, T.~W.: Program analysis via graph reachability,
\newblock \emph{Information {\&} Software Technology}, \textbf{40}(11-12),
  1998, 701--726.

\bibitem{Sudb75}
Sudborough, I.~H.: A Note on Tape-Bounded Complexity Classes and Linear
  Context-Free languages,
\newblock \emph{Journal of the ACM}, \textbf{22}(4), 1975, 499--500.

\bibitem{Sudborough78}
Sudborough, I.~H.: On the Tape Complexity of Deterministic Context-Free
  Languages,
\newblock \emph{Journal of the ACM}, \textbf{25}(3), 1978, 405--414.

\bibitem{UllmanG88}
Ullman, J.~D., van Gelder, A.: Parallel Complexity of Logical Query Programs,
\newblock \emph{Algorithmica}, \textbf{3}, 1988, 5--42.

\bibitem{Yanna90}
Yannakakis, M.: Graph-Theoretic Methods in Database Theory,
\newblock \emph{Proceedings of the 9th ACM Symposium on Principles of Database
  Systems}, 1990.

\end{thebibliography}

\end{document}